\documentclass[12pt]{article}
\usepackage{cite}
\usepackage{amsmath,amssymb,amsfonts,mathtools}
\usepackage{graphicx}
\usepackage{textcomp}
\usepackage{xcolor}
\usepackage{colortbl}
\usepackage{amsthm}
\usepackage[hidelinks]{hyperref}
\usepackage{caption}
\usepackage{array}
\usepackage{mathrsfs}
\usepackage{lmodern}
\usepackage{listings}
\usepackage{longtable}
\usepackage{booktabs}
\usepackage{enumitem}

\newcommand{\leanlistingfont}{\ttfamily}
\lstdefinelanguage{Lean4}{
  keywords={theorem, lemma, def, by, have, rw, simp, ring, native_decide, norm_num, omega, let, show, from, fun, exact, refine, intro, rintro, revert, constructor, ext, import, open, section, end, where, match, with, if, then, else, do, return, sorry, variable, instance, class, structure, namespace, noncomputable, deriving},
  morecomment=[l]{--},
  morecomment=[n]{/-}{-/},
  morestring=[b]",
  sensitive=true,
}
\usepackage{xpatch}
\makeatletter
\AtBeginDocument{\xpatchcmd{\@thm}{\thm@headpunct{.}}{\thm@headpunct{}}{}{}}
\makeatother

\newtheorem{defn}{Definition}[section]
\newtheorem{theorem}{Theorem}[section]

\newtheorem{prop}[theorem]{Proposition}

\newtheorem{example}{Example}[section]

\newtheorem{remark}[theorem]{Remark}

\newcommand{\F}{\mathbb{F}}

\newcommand{\rank}{\mathrm{rank}}

\newcommand{\wt}{\mathrm{wt}}

\newcommand{\et}{\mathrm{\acute{e}t}}

\begin{document}

\title{Formalizing building-up constructions of self-dual codes through isotropic lines in Lean}

\author{
Jae-Hyun Baek \\
Department of Artificial Intelligence and \\
Institute for Mathematical and Data Sciences \\
Sogang University, Seoul, Korea
\and
Jon-Lark Kim\\
Department of Mathematics and\\
Institute for Mathematical and Data Sciences \\
Sogang University, Seoul, Korea
}

\date{September 10, 2026}

\maketitle

\begin{abstract}
The purpose of this paper is two-fold. First, we show that, after a specified
form isometry, the two-coordinate reduction in the binary Hilbert-symbol
realization of Chinburg and Zhang is inverse to Kim's building-up construction,
up to permutation equivalence. Second,
for $q\equiv1\pmod4$, we develop a $q$-ary analogue of this
reduction-and-extension mechanism. The identity $c^2=-1$ yields the
isotropic line governing the split construction.  For every fixed ordered
pairing of the coordinates, we obtain a universal rank-$r$ boxed normal
form, where $r$ is the dimension of the intersection with the product of
these isotropic lines.
Applications include optimal self-dual $[6,3,4]$ and $[8,4,4]$ codes over
$\mathbb F_{5}$, optimal self-dual $[8,4,5]$ and $[10,5,6]$ codes over
$\mathbb F_{13}$, and a self-dual $[12,6,6]$ code over $\mathbb F_{13}$.  We also give
an exact repeated boxed realization of self-dual $[18,9,8]$ and
$[20,10,10]$ codes over $\mathbb F_{13}$, in which the split-boxed parent and its
building-up child occur in one complete generator matrix.
The algebraic core is formalized in Lean~4.
\end{abstract}

% \tableofcontents
% \newpage

%% ===========================================================================
\section{Introduction}
\label{sec:intro}
%% ===========================================================================

Coding theory is the study of reliable communication. Its modern foundations
were laid by Shannon's seminal paper~\cite{Sha}. Since then, many classes of
linear codes have been studied. In particular, the extended binary Hamming
code of length 8, the binary Golay code of length 24, and the ternary Golay
code of length 12 are self-dual, that is,
$\mathcal C=\mathcal C^{\perp}$.
Thus the classification or construction of self-dual codes has been one of the most active research problems~\cite{self-dual}.
Self-dual codes have wide connections with several mathematical areas including $t$-design theory~\cite{BacGab}, group theory~\cite{ConSlo}, number theory~\cite{BaDou},~\cite{ConSlo},~\cite{ShiChoShaSol}. Furthermore, they have applications in quantum error-correcting codes~\cite{CRSS}.

Kim~\cite{Kim2001} introduced ``building-up construction'' which proved that all binary self-dual codes can be constructed in this way. This construction starts from a binary self-dual code of length $2n$ and then adds a vector of length $2n$ of odd weight to construct a binary self-dual code of length $2n+2$. Kim--Lee~\cite{KimLee2009},~\cite{KimLee} generalized this method to construct self-dual codes over finite fields $\mathbb F_q$, where $q$ is even or $q \equiv 1 \pmod{4}$.
 In practice, this is a principal method for constructing examples from a
 split bilinear space. However, the formula for the new generator rows is
 often presented in terms of an auxiliary vector whose inner product with
 itself is $-1$. One of the central points of the present paper is that the
 corresponding correction coordinates are naturally organized by an
 isotropic line in a hyperbolic plane, and that the forward construction
 admits a precise row-wise and family-wise characterization in those terms.

 On the other hand, Kreck and Puppe~\cite{KreckPuppe} constructed binary
self-dual codes from the cohomology of three-manifolds. Chinburg and
Zhang~\cite{ChinburgZhang} developed the arithmetic analogue for rings of
$S$-integers. For a suitable set $S$ of places of a number field, the image
of the global-to-local map in first \'{e}tale cohomology is Lagrangian for
the sum of the local Hilbert pairings. Their Theorem~1.5 shows that, for
$K=\mathbb Q$, every binary self-dual code of length at least four occurs up
to permutation equivalence. We use precisely this realization theorem and
its boxed reduction; no further \'{e}tale-cohomological machinery is needed
below.

In this paper, we focus on $q \equiv 1 \pmod 4$. Over $\F_q$, self-dual codes are maximal totally isotropic subspaces for the standard Euclidean bilinear form on $\F_q^{2n}$. In this split case the elementary identity $c^2=-1$ is the basic algebraic input behind both the classical building-up construction and the structural description of the ambient quadratic space: it produces an isotropic line in $\F_q^2$, identifies the Euclidean plane with a hyperbolic plane, and makes the norm form $x^2+y^2$ split.

%The purpose of this paper is to make that common mechanism explicit and to push it simultaneously in three directions: constructive building-up, cohomological interpretation, and machine-checked verification.

We establish that each generator row of the building-up admits the isotropic line decomposition $r_i = -y_i \cdot (1, c, 0, \ldots, 0) + (0, 0, g_i)$, where $(1, c)$ spans an isotropic line in the hyperbolic plane. This decomposition provides a precise geometric interpretation of the correction terms in the building-up construction. We also give a constructive norm-form: once $c^2=-1$ and $2\neq 0$, every $a\in\F_q$ can be written explicitly as $a=x^2+y^2$, which in turn yields extension vectors of prescribed norm.

The odd-characteristic and binary inputs are distinct. For
\(q\equiv1\pmod4\), the element \(-1\) is a square in \(\F_q^\times\), so
the Euclidean plane is hyperbolic: the vectors
\[
h_1=(1,c),\qquad h_2=\frac12(1,-c)
\]
satisfy \(h_1\cdot h_1=h_2\cdot h_2=0\) and
\(h_1\cdot h_2=1\). In the binary theorem of
Chinburg--Zhang~\cite{ChinburgZhang}, the arithmetic image is Lagrangian for
a cohomological pairing; selected orthonormal coordinates identify that
pairing with the Euclidean dot product. The coordinate plane deleted in the
resulting binary reduction has Gram matrix \(I_2\) and is non-alternating.

In the present paper, this distinction organizes the binary comparison, the
split \(q\)-ary construction, its universal normal form, and selected
applications over \(\F_{5}\) and \(\F_{13}\).

Our main contributions are as follows.
\begin{enumerate}[label=(\arabic*)]

%\item We reinterpret the building-up rows through a split isotropic line in a hyperbolic plane and obtain both a one-step split building-up theorem and a split boxed-form theorem, together with row-wise and family-wise characterization and uniqueness over a fixed parent (Section~\ref{sec:building-up}).

%\item We develop the reverse side in the same split language: after fixing a distinguished row on the \(c\)-isotropic line and a pivot-adapted basis for the remaining rows, self-duality forces exactly the boxed relations appearing in the forward theorem. In this sense the split boxed form gives both a constructive extension theorem and a conditional normalization theorem (Section~\ref{sec:building-up}).

\item In the binary case, we distinguish the cohomological pairing from the
Euclidean dot product by a specified form isometry. In the resulting
Euclidean coordinates, boxed reduction deletes a two-coordinate block with
Gram matrix \(I_2\), the tail rows form a self-dual predecessor, and the
remaining coefficients satisfy Kim's correction formula
(Theorem~\ref{thm:binary-cz-kim}).

\item In the \(q\)-ary case, we prove the adapted reduction and extension
theorem, a forward split boxed construction, and its reverse statement under
an explicit adapted-presentation hypothesis
(Theorems~\ref{thm:qary-building-up}--\ref{thm:conditional-split-boxed-normalization}).

\item For each fixed ordered coordinate pairing, every Euclidean self-dual
code over \(\F_q\), \(q\equiv1\pmod4\), has the universal rank-\(r\) boxed
normal form of Theorem~\ref{thm:qary-universal-rank-r}, with
\(r=\dim(\mathcal C\cap U_c)\).

\item The applications retain the explicit small-length results over
\(\F_{5}\) and \(\F_{13}\) from the submitted manuscript and add one complete
repeated boxed realization of self-dual \([18,9,8]\) and \([20,10,10]\)
codes over \(\F_{13}\), displayed as an exact parent--child generator
(Section~\ref{sec:applications}).

\item The modular Lean~4 development formalizes the paper-facing algebraic
interfaces without \texttt{sorry} or \texttt{native\_decide}; independent
Challenge/Solution comparisons check the exact theorem statements.
\end{enumerate}

The cited arithmetic and coding-theoretic realization theorems remain
mathematical inputs. Lean verifies the finite-dimensional algebra and the
displayed construction, reduction, and normal-form interfaces; the distance
enumerations are independent exact computations.

The paper is organized as follows. Section~\ref{sec:background} fixes
notation. Section~\ref{sec:building-up} proceeds from the binary case to the
split \(q\)-ary case and finally to the universal normal form.
Section~\ref{sec:applications} contains the selected applications, and
Section~\ref{sec:conclusion} concludes.

%% ===========================================================================
\section{Preliminaries}
\label{sec:background}
%% ===========================================================================

%\subsection{Self-Dual Codes over Finite Fields}

We refer to~\cite{HuffmanPless} for basic concepts and facts and to~\cite{self-dual} for self-dual codes.

Throughout the paper, \(\F_q\) denotes a finite field. We equip \(\F_q^n\) with the Euclidean bilinear form
\[
\langle u,v\rangle := \sum_{i=1}^n u_i v_i.
\]
Here ``Euclidean'' refers to the coding-theoretic dot product, not to
positive definiteness. For \(u,v\in\F_q^n\), their distance and the weight
of \(u\) are, respectively,
\[
d(u,v):=\bigl|\{i\in\{1,\ldots,n\}: u_i\ne v_i\}\bigr|,
\qquad
\wt(u):=\bigl|\{i\in\{1,\ldots,n\}: u_i\ne0\}\bigr|=d(u,0).
\]
For a code \(\mathcal C\subseteq\F_q^n\) with at least two codewords, its
minimum distance is
\[
d(\mathcal C):=\min\{d(u,v):u,v\in\mathcal C,\ u\ne v\},
\]
and, when \(\mathcal C\ne\{0\}\), its minimum weight is
\[
\wt(\mathcal C):=\min\{\wt(u):u\in\mathcal C,\ u\ne0\}.
\]
Since \(d(u,v)=\wt(u-v)\), the minimum distance of a nonzero linear code
equals the minimum weight of its nonzero codewords:
\(d(\mathcal C)=\wt(\mathcal C)\). For a linear code
$\mathcal C\subseteq\F_q^n$, we write
\[
\mathcal C^\perp :=
\{v \in \F_q^n : \langle v,c\rangle =0\text{ for all }c\in\mathcal C\}.
\]
For subspaces \(\mathcal A,\mathcal B\subseteq V\), put
\(\mathcal A\cap\mathcal B:=\{v\in V:v\in\mathcal A\text{ and }
v\in\mathcal B\}\).  For a matrix \(G\) over \(K\),
\(\langle G\rangle_K\) denotes the code spanned by its rows.

\begin{defn}[self-dual code]\label{def:sd}
{\rm
A linear code $\mathcal C\subseteq\F_q^n$ is \emph{self-dual} if
$\mathcal C=\mathcal C^\perp$~\cite{Pless1965}.
}
\end{defn}

\begin{defn}[Lagrangian subspace]\label{def:lagrangian}
{\rm
Let \(V\) be a finite-dimensional vector space over \(\F_q\), equipped
with a nondegenerate symmetric bilinear form. A subspace \(L\subseteq V\)
is called \emph{Lagrangian}~\cite{Mon,Ser} if \(L=L^\perp\).
}
\end{defn}

\noindent
For a nondegenerate form,
\(\dim L^\perp=\dim V-\dim L\). Hence \(L=L^\perp\) if and only if
\(L\) is totally isotropic and \(2\dim L=\dim V\). In particular, a
bilinear space admitting a Lagrangian subspace is even-dimensional.
Applied to the standard Euclidean form, this says that
\(\mathcal C\subseteq\F_q^n\) is self-dual if and only if it is totally isotropic
and \(2\dim\mathcal C=n\). Thus a self-dual code is precisely a Lagrangian
subspace of \((\F_q^n,\langle\cdot,\cdot\rangle)\), and its length is even.

\begin{lstlisting}[caption={Lean for Definitions~\ref{def:sd} and~\ref{def:lagrangian}.}]
def paperSelfDualCode (C : Submodule K (Fin n → K)) : Prop :=
  C = (dotBilin (K := K) (n := n)).orthogonal C

def paperLagrangianSubspace (L : Submodule K (Fin n → K)) : Prop :=
  L = (dotBilin (K := K) (n := n)).orthogonal L

theorem paper_self_dual_iff_lagrangian :
    paperSelfDualCode (K := K) C ↔ paperLagrangianSubspace (K := K) C

theorem paper_dotBilin_nondegenerate :
    (dotBilin (K := K) (n := n)).Nondegenerate

theorem paperLagrangianSubspace_iff_totallyIsotropic_and_finrank_half :
    paperLagrangianSubspace (K := K) L ↔
      IsTotallyIsotropic (dotBilin (K := K) (n := n)) L ∧
        2 * Module.finrank K L = Module.finrank K (Fin n → K)

theorem paperSelfDualCode_iff_totallyIsotropic_and_finrank_half :
    paperSelfDualCode (K := K) C ↔
      IsTotallyIsotropic (dotBilin (K := K) (n := n)) C ∧
        2 * Module.finrank K C = Module.finrank K (Fin n → K)

theorem paperLagrangianSubspace_even_length
    (hL : paperLagrangianSubspace (K := K) L) : Even n

theorem paperSelfDualCode_even_length
    (hC : paperSelfDualCode (K := K) C) : Even n
\end{lstlisting}

\begin{prop}[Systematic form criterion]\label{prop:systematic-form}
Let $\mathcal C\subseteq\F_q^{2k}$ be generated by a matrix
$G=[I_k\mid A]$ with $A\in M_k(\F_q)$. Then \(\mathcal C\) is self-dual if and only if
\[
AA^T = -I_k.
\]
\end{prop}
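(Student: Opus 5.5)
The plan is to reduce self-duality to a Gram-matrix computation together with a dimension count, using the criterion recorded just before the statement: a code \(\mathcal C\subseteq\F_q^{2k}\) is self-dual if and only if it is totally isotropic and \(2\dim\mathcal C=2k\).

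First I would fix the dimension. The matrix \(G=[I_k\mid A]\) contains the identity block \(I_k\) in its first \(k\) columns, so its \(k\) rows are linearly independent. Hence \(\dim\mathcal C=k\), and the dimension condition \(2\dim\mathcal C=2k\) holds automatically, whatever \(A\) is. The statement therefore reduces to showing that \(\mathcal C\) is totally isotropic if and only if \(AA^T=-I_k\).

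Next I would translate total isotropy into a matrix identity. By bilinearity of \(\langle\cdot,\cdot\rangle\), the code \(\mathcal C\) is totally isotropic exactly when every pair of generator rows is orthogonal. The pairings of the rows are the entries of the Gram matrix \(GG^T\), so this condition is \(GG^T=0\). Block multiplication gives
\[
GG^T=[I_k\mid A]\begin{bmatrix} I_k\\ A^T\end{bmatrix}=I_k+AA^T .
\]
Thus \(GG^T=0\) is equivalent to \(AA^T=-I_k\), and both directions of the proposition follow at once.

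None of these steps should be a real obstacle. The only point needing care is the passage from orthogonality of the generating rows to orthogonality of all codewords. This uses bilinearity: for \(u=xG\) and \(v=yG\) we have \(\langle u,v\rangle=xGG^Ty^T\). The other point to keep track of is that the dimension half of the criterion is free here because of the identity block. In Lean I expect the hardest part to be the rank computation \(\operatorname{finrank}\mathcal C=k\). I would obtain it by showing that the projection onto the first \(k\) coordinates maps \(\mathcal C\) onto \(\F_q^k\) and is injective on the row span.
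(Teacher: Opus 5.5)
Your proposal is correct and follows essentially the same route as the paper: the identity block gives $\dim\mathcal C=k$, so self-duality reduces to pairwise orthogonality of the rows, i.e.\ $GG^T=I_k+AA^T=0$. Your remarks on bilinearity and on the half-dimension criterion only spell out steps that the paper's two-line proof leaves implicit.
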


\begin{proof}
The rows of $G$ span a $k$-dimensional code. Hence self-duality is equivalent to pairwise orthogonality of the rows. The Gram matrix $GG^T$ of the rows is
\[
GG^T = I_k + AA^T = O,
\]
so the orthogonality condition is exactly $AA^T = -I_k$.
\end{proof}

\begin{lstlisting}[caption={Lean for Proposition~\ref{prop:systematic-form}.}]
theorem paper_systematic_form_criterion_exact
    (A : Matrix (Fin k) (Fin k) K) :
    paperSelfDualCode (K := K) (rowSpace (systematicRows A)) ↔
      A * A.transpose = -(1 : Matrix (Fin k) (Fin k) K)
\end{lstlisting}

\begin{defn}[Permutation equivalence]\label{def:perm-equiv}
{\rm
Writing codewords as row vectors, two codes
$\mathcal C_1,\mathcal C_2\subseteq\F_q^n$ are
\emph{permutation equivalent} if there is a permutation \(\sigma\in S_n\)
such that
\[
 \mathcal C_2=\mathcal C_1P_\sigma,
\]
where \(P_\sigma\) is the permutation matrix associated with \(\sigma\).
}
\end{defn}

In the Lean notation below, \texttt{permuteVec~\(\sigma\)~u} denotes
the row-vector product \(uP_\sigma\).  Coordinate permutations are Euclidean
isometries:
\[
\langle uP_\sigma,vP_\sigma\rangle=\langle u,v\rangle.
\]
Consequently, permutation equivalence preserves Euclidean self-duality.
A general monomial transformation also permits independent nonzero
coordinate scalings; it need not preserve the Euclidean form or
self-duality. It is therefore not included in our use of ``equivalence''
below. Over \(\F_2\), permutation and monomial equivalence coincide because
\(\F_2^\times=\{1\}\).

\begin{lstlisting}[caption={Lean for Definition~\ref{def:perm-equiv}.}]
def CodeEquiv (G1 : Fin m₁ → Fin n → K)
    (G2 : Fin m₂ → Fin n → K) : Prop :=
  ∃ σ : Equiv.Perm (Fin n),
    rowSpace (permuteFamily σ G1) = rowSpace G2

theorem dot_coordinatePermuteLinearEquiv :
    dot (permuteVec σ u) (permuteVec σ v) = dot u v

theorem codeEquiv_preserves_paperSelfDualCode
    (hG1G2 : CodeEquiv G1 G2)
    (hG1 : paperSelfDualCode (K := K) (rowSpace G1)) :
    paperSelfDualCode (K := K) (rowSpace G2)
\end{lstlisting}

%\subsection{Arithmetic of \texorpdfstring{$\F_q$}{Fq} with \texorpdfstring{$q \equiv 1 \pmod{4}$}{q = 1 (mod 4)}}

From this point onward, when discussing the split \(q\)-ary construction we assume that \(q\) is odd and satisfies \(q \equiv 1 \pmod 4\). The condition \(q \equiv 1 \pmod 4\) is equivalent to $-1$ being a square in $\F_q^\times$. We fix once and for all an element
\[
c \in \F_q^\times \qquad \text{with} \qquad c^2 = -1.
\]
This choice is used throughout the building-up formulas. We call
\[
\mathcal I_c:=\F_q(1,c)\subseteq\F_q^2
\]
the \(c\)-isotropic line; indeed, \((1,c)\cdot(1,c)=1+c^2=0\).

\begin{defn}[hyperbolic plane]\label{def:hyp}
{\rm
A hyperbolic plane~\cite{Ser} over $\F_q$ is a $2$-dimensional symmetric bilinear space with basis $\{e_1,e_2\}$ satisfying
\[
\langle e_1,e_1\rangle = \langle e_2,e_2\rangle = 0,
\qquad
\langle e_1,e_2\rangle = 1.
\]
Such a basis \(\{e_1,e_2\}\) is called a hyperbolic pair. Any $1$-dimensional totally isotropic subspace is called an isotropic line.
}
\end{defn}

In even dimension and odd characteristic, a nondegenerate symmetric
bilinear space is called \emph{split} if it is an orthogonal direct sum of
hyperbolic planes. Thus splitting is a property of the bilinear space, not
of the underlying finite field.

\begin{prop}\label{prop:split-consequences}
Let \(q\) be odd and let \(c\in\F_q\) satisfy \(c^2=-1\). Then:
\begin{enumerate}[label=(\roman*)]
\item The element $c$ has multiplicative order $4$.
\item For every $k \ge 1$, the matrix equation $AA^T = -I_k$ has the explicit solution $A=cI_k$.
\item The binary quadratic form $x^2+y^2$ splits as
\[
x^2+y^2 = (x-cy)(x+cy),
\]
hence it represents every element of\/ $\F_q$.
\end{enumerate}
\end{prop}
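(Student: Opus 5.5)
The three parts are independent, elementary computations, and I will treat them in order. Throughout, the only inputs are \(c^2=-1\) and \(q\) odd, so that \(2\neq0\) in \(\F_q\) and hence \(-1\neq 1\).

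For (i), note first that \(c\neq0\), since \(0^2=0\neq-1\). We have \(c^4=(c^2)^2=(-1)^2=1\), so the order of \(c\) divides \(4\). The order is not \(1\): if \(c=1\), then \(c^2=1=-1\), forcing \(2=0\), which contradicts \(q\) odd. The order is not \(2\) either, since that would mean \(c^2=1\), again giving \(-1=1\). Hence the order is exactly \(4\). The only care needed is to record explicitly where oddness of \(q\) enters.

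For (ii), set \(A=cI_k\). Then \(A^T=cI_k\) and \(AA^T=c^2I_k=-I_k\). By Proposition~\ref{prop:systematic-form}, this also shows that \([I_k\mid cI_k]\) generates a self-dual code, although the statement only asks for the matrix identity.

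For (iii), the factorization is a direct expansion:
\[
(x-cy)(x+cy)=x^2-c^2y^2=x^2+y^2 .
\]
To see that every \(a\in\F_q\) is represented, I use the substitution \(u=x-cy\), \(v=x+cy\). This is an invertible linear change of variables, with inverse
\[
x=\tfrac{u+v}{2},\qquad y=\tfrac{v-u}{2c},
\]
which is valid because \(2\) and \(c\) are units. In these coordinates the form becomes \(uv\), and choosing \(u=1\), \(v=a\) gives an explicit representation:
\[
x=\frac{1+a}{2},\qquad y=\frac{a-1}{2c}.
\]
One can check directly that
\[
x^2+y^2=\frac{(1+a)^2-(a-1)^2}{4}=a,
\]
using \(c^2=-1\), so \(1/(4c^2)=-1/4\). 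This closed-form witness is also the natural thing to hand to a Lean \texttt{ring}/\texttt{field\_simp} proof.

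I expect no real obstacle in any of the three parts. The only subtle point is (i), where oddness of \(q\) is essential: over \(\F_2\) we have \(c=1\), and the order would be \(1\).
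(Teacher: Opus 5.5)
Your proof is correct and follows the paper's argument: the same order computation from $c^4=1$ and $c^2\neq 1$, the same choice $A=cI_k$, and the same representation witness, since $\frac{a-1}{2c}=\frac{(1-a)c}{2}$ because $c^{-1}=-c$. The paper verifies $x^2+y^2=a$ through $x-cy=1$, $x+cy=a$ and the factorization, which is exactly your substitution $u=1$, $v=a$.
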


\begin{proof}
For (i), since \(q\) is odd, \(-1\ne1\). Thus \(c^4=1\) and
\(c^2\ne1\), so \(c\) has order \(4\).

For (ii), take \(A=cI_k\). Then
\(AA^T=c^2I_k=-I_k\).

For (iii),
\[
(x-cy)(x+cy)=x^2-c^2y^2=x^2+y^2.
\]
Given \(a\in\F_q\), take
\[
x=\frac{1+a}{2},\qquad y=\frac{1-a}{2}c.
\]
Then \(x-cy=1\), \(x+cy=a\), and consequently
\(x^2+y^2=a\).
\end{proof}

%\subsection{Hyperbolic Planes and Isotropic Subspaces}

\begin{prop}\label{prop:eucl-hyp}
Let \(q\) be odd and let \(c\in\F_q\) satisfy \(c^2=-1\). Then the vectors
\[
e_1=(1,c), \qquad e_2=(2^{-1},-2^{-1}c)
\]
form a hyperbolic basis of\/ $\F_q^2$. Equivalently, the line spanned by $(1,c)$ is isotropic and the Euclidean plane is split.
\end{prop}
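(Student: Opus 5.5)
The plan is a direct verification of the three Gram entries, followed by a linear-independence check; no earlier result beyond \(c^2=-1\) and the invertibility of \(2\) is needed. Since \(q\) is odd, \(2\in\F_q^\times\), so \(e_2\) is well defined. I will compute each entry of the Gram matrix with respect to the Euclidean form \(\langle u,v\rangle=u_1v_1+u_2v_2\) on \(\F_q^2\).

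The computations are as follows. First, \(\langle e_1,e_1\rangle=1+c^2=0\), which is the isotropy of the \(c\)-isotropic line \(\mathcal I_c\) already observed in the preliminaries. Second,
\[
\langle e_2,e_2\rangle=2^{-2}(1+(-c)^2)=2^{-2}(1+c^2)=0.
\]
Third,
\[
\langle e_1,e_2\rangle=2^{-1}\cdot 1+c\cdot(-2^{-1}c)=2^{-1}(1-c^2)=2^{-1}\cdot 2=1.
\]
Symmetry of the form gives \(\langle e_2,e_1\rangle=1\) as well.

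Next I must check that \(\{e_1,e_2\}\) is actually a basis of \(\F_q^2\). The determinant of the matrix with rows \(e_1,e_2\) is
\[
1\cdot(-2^{-1}c)-c\cdot 2^{-1}=-c\ne0,
\]
since \(c\in\F_q^\times\). Alternatively, the Gram matrix \(\begin{pmatrix}0&1\\1&0\end{pmatrix}\) is invertible, so the two vectors cannot be dependent. Either way \(\{e_1,e_2\}\) is a hyperbolic pair spanning \(\F_q^2\), which is Definition~\ref{def:hyp}.

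For the equivalent formulation, the span of \((1,c)\) is a \(1\)-dimensional totally isotropic subspace, i.e.\ an isotropic line. The existence of a hyperbolic basis shows that \((\F_q^2,\langle\cdot,\cdot\rangle)\) is itself a hyperbolic plane, hence split in the sense defined before Proposition~\ref{prop:split-consequences}. This also yields, by taking orthogonal sums of coordinate planes, that \(\F_q^{2n}\) is split.

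There is no real obstacle. The only points requiring care are the use of \(2\neq0\) (oddness of \(q\)) to define \(e_2\) and to simplify \(2^{-1}\cdot 2=1\), and remembering to verify linear independence rather than just the Gram entries. For the Lean side, I would expect a \texttt{ring}/\texttt{field\_simp} computation after rewriting with \(c^2=-1\) and the hypothesis \((2:K)\ne0\).
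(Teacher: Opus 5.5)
Your proposal is correct and matches the paper's proof: the same three Gram-entry computations, followed by a linear-independence check. Your second route to independence (the Gram matrix is invertible) is exactly the paper's step of pairing a relation \(\alpha e_1+\beta e_2=0\) with \(e_2\) and \(e_1\); your determinant \(-c\neq0\) also works, since \(c\neq0\) follows automatically from \(c^2=-1\).
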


\begin{proof}
Direct calculation gives
\[
\langle e_1,e_1\rangle=1+c^2=0,
\qquad
\langle e_2,e_2\rangle=2^{-2}(1+c^2)=0,
\qquad
\langle e_1,e_2\rangle=2^{-1}(1-c^2)=1.
\]
If \(\alpha e_1+\beta e_2=0\), pairing successively with \(e_2\) and
\(e_1\) gives \(\alpha=\beta=0\). Hence \(e_1,e_2\) form a hyperbolic
basis. In particular, \(\F_q(1,c)\) is isotropic and the Euclidean plane is
split.
\end{proof}

\begin{lstlisting}[caption={Lean for Proposition~\ref{prop:split-consequences} and Proposition~\ref{prop:eucl-hyp}.}]
theorem paper_split_consequences_exact
    [Fact ((2 : K) ≠ 0)] (c : K) (hc : c ^ 2 = (-1 : K)) :
    orderOf c = 4 ∧
      (∀ k : Nat,
        (c • (1 : Matrix (Fin k) (Fin k) K)) *
            (c • (1 : Matrix (Fin k) (Fin k) K)).transpose =
          -(1 : Matrix (Fin k) (Fin k) K)) ∧
      (∀ a : K, ∃ x y : K,
        x - c * y = 1 ∧ x + c * y = a ∧ x ^ 2 + y ^ 2 = a)

theorem paper_euclidean_plane_hyperbolic_basis_exact
    [Fact ((2 : K) ≠ 0)] (c : K) (hc : c ^ 2 = (-1 : K)) :
    paperHyperbolicPair (splitE1 c) (splitE2 c) ∧
      LinearIndependent K ![splitE1 c, splitE2 c]
\end{lstlisting}

\section{Building-Up Theory}
\label{sec:building-up}

\subsection{Binary Case}

In 2001, Kim~\cite{Kim2001} proposed a building-up construction of binary self-dual codes in order to construct many self-dual codes of larger lengths from a self-dual code of a given length. An explicit form of the building-up construction is given below.

\begin{theorem}(\cite[Theorem 1]{Kim2001}) \label{thm-binary-build-1}
%Let $\mathbb F_q$ be a finite field with $q$ elements. Assume that $q \equiv 1 \pmod{4}$. Let $c$ be in $\mathbb F_q$ such that $c^2 = -1$.
Let $G_0$ be a generator matrix of a binary self-dual code
$\mathcal C_0$ of length $2n$, whose rows are $g_1, \dots, g_n$. Let
$\bf x$ be a binary vector in $\mathbb F_2^{2n}$ satisfying
${\bf x}\cdot{\bf x}=1$. Suppose that $y_i={\bf x}\cdot g_i$ for
$1\le i\le n$. Then the following matrix
\[
\begingroup
\setlength{\arraycolsep}{6pt}
\renewcommand{\arraystretch}{1.2}
G=
\left[
\begin{array}{cc!{\vrule width 1.2pt}c}
1 & 0 & \mathbf{x}\\
\noalign{\hrule height 1.2pt}
y_1 & y_1 & g_1\\
\vdots & \vdots & \vdots\\
y_n & y_n & g_n
\end{array}
\right]
\endgroup
\]
generates a binary self-dual code $\mathcal C$ of length $2n+2$.
\end{theorem}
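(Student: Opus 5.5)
The plan is to use the criterion stated after Definition~\ref{def:lagrangian}: a code of length \(2n+2\) is self-dual if and only if it is totally isotropic and has dimension \(n+1\). The proof therefore splits into two parts. First we show the \(n+1\) rows of \(G\) are pairwise orthogonal and each is orthogonal to itself. Then we show these rows are linearly independent over \(\F_2\).

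For total isotropy, write \(r_0=(1,0,\mathbf x)\) and \(r_i=(y_i,y_i,g_i)\) for \(1\le i\le n\). There are four inner products to check.
\begin{itemize}
\item First, \(r_0\cdot r_0=1+\mathbf x\cdot\mathbf x=1+1=0\), using the hypothesis \(\mathbf x\cdot\mathbf x=1\).
\item Next, \(r_0\cdot r_i=y_i+\mathbf x\cdot g_i=y_i+y_i=0\), by the definition of \(y_i\).
\item For \(i\ge1\), \(r_i\cdot r_i=y_i^2+y_i^2+g_i\cdot g_i=0\), since \(2y_i^2=0\) in characteristic \(2\) and \(g_i\cdot g_i=0\) because \(\mathcal C_0\) is self-dual.
\item Similarly, \(r_i\cdot r_j=2y_iy_j+g_i\cdot g_j=0\) for \(i,j\ge1\).
\end{itemize}
By bilinearity, the row space of \(G\) is then totally isotropic.

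For the rank, suppose \(\sum_{i=0}^n a_i r_i=0\) with \(a_i\in\F_2\). The second coordinate of \(r_0\) is \(0\) and the first coordinate of \(r_0\) is \(1\). Comparing the first two coordinates therefore gives
\[
a_0+\sum_{i\ge1}a_iy_i=0,\qquad \sum_{i\ge1}a_iy_i=0,
\]
so \(a_0=0\). The tail coordinates then give \(\sum_{i\ge1}a_ig_i=0\). Since \(g_1,\dots,g_n\) form a basis of \(\mathcal C_0\), all \(a_i=0\). Hence \(\dim\mathcal C=n+1\), which is half of \(2n+2\), and the criterion gives self-duality.

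There is no real obstacle. The only points needing care are where characteristic \(2\) is used: it kills the \(2y_iy_j\) terms, and it makes \(r_0\cdot r_0=1+1=0\). The independence argument hinges on the asymmetric first two entries \((1,0)\) of \(r_0\), and that step should be stated explicitly.
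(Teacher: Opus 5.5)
Your proof is correct and follows the paper's approach. The paper only cites Kim for this binary statement, but it proves the $q$-ary analogue, Theorem~\ref{thm:qary-building-up}(5a), by your Gram calculation together with the observation that every combination of successor rows has its first two coordinates on the isotropic line (here $\F_2(1,1)$) while $r_0$ has head $(1,0)$, which is exactly your comparison of the first two coordinates. The one tacit step, that $g_1,\dots,g_n$ are independent, holds because $\dim\mathcal C_0=n$ for a self-dual code of length $2n$.
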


Then the converse of Theorem~\ref{thm-binary-build-1} was proved in~\cite{Kim2001} as follows.

\begin{theorem} (\cite[Theorem 2]{Kim2001})  \label{thm-binary-build-2}
 Any binary self-dual code $\mathcal C$ of length $2n$ with minimum distance $d>2$ is obtained from some self-dual code $\mathcal C_0$ of length
$2n-2$ (up to permutation equivalence) by the construction in
Theorem~\ref{thm-binary-build-1}.
\end{theorem}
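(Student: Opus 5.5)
The plan is to reverse the construction of Theorem~\ref{thm-binary-build-1} by locating two coordinates of $\mathcal C$ that can be stripped off. Since $d>2$ and $\mathcal C$ is binary self-dual, every codeword has even weight, so $d\ge 4$. Fix two coordinates, after a permutation the first two. The key observation is that the vector $e_1+e_2$, supported on these two coordinates, is not in $\mathcal C$, since it has weight $2<d$. Because $\mathcal C=\mathcal C^\perp$, this means $e_1+e_2\notin\mathcal C^\perp$, so some codeword $u$ has $u_1\ne u_2$. Hence the linear functional $v\mapsto v_1+v_2$ is nonzero on $\mathcal C$. Its kernel $\mathcal C'=\{v\in\mathcal C: v_1=v_2\}$ has dimension $n-1$, and every codeword with $(v_1,v_2)=(1,0)$ or $(0,1)$ lies in the complementary coset.

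Next choose the rows. After swapping coordinates $1$ and $2$ if needed, pick $u\in\mathcal C$ with $(u_1,u_2)=(1,0)$ and write $u=(1,0,\mathbf x)$. Self-orthogonality $u\cdot u=0$ gives $1+\mathbf x\cdot\mathbf x=0$, i.e.\ $\mathbf x\cdot\mathbf x=1$. Take a basis of $\mathcal C'$, written $(y_i,y_i,g_i)$ for $i=1,\dots,n-1$. For these, $(y_i,y_i,g_i)\cdot(y_j,y_j,g_j)=2y_iy_j+g_i\cdot g_j=g_i\cdot g_j=0$. Also $u\cdot(y_i,y_i,g_i)=y_i+\mathbf x\cdot g_i=0$ gives $y_i=\mathbf x\cdot g_i$, which is exactly Kim's correction formula. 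The rows $u$ and $(y_i,y_i,g_i)$ span $\mathcal C$ because $u\notin\mathcal C'$ and the dimensions add to $n$. So $G$ has precisely the shape in Theorem~\ref{thm-binary-build-1}, with $G_0$ having rows $g_1,\dots,g_{n-1}$.

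The remaining point, which I expect to be the main obstacle, is that $G_0$ generates a self-dual code $\mathcal C_0$ of length $2n-2$. Total isotropy of the $g_i$ was shown above, so it suffices to show the $g_i$ are linearly independent, which gives $\dim\mathcal C_0=n-1$ and hence self-duality. Suppose $\sum a_ig_i=0$. Then $\sum a_i(y_i,y_i,g_i)=(s,s,0)$ is a codeword, where $s=\sum a_i y_i$. If $s=1$, this codeword is $e_1+e_2$, of weight $2<d$, a contradiction. So $s=0$, and then independence of the rows of $\mathcal C'$ forces all $a_i=0$. This is the only place $d>2$ is used a second time, and it is exactly what makes the reduction well defined.

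Finally, $\mathcal C$ equals, up to the coordinate permutation used above, the code produced by Theorem~\ref{thm-binary-build-1} from $\mathcal C_0$ and $\mathbf x$. The condition $y_i=\mathbf x\cdot g_i$ holds for this particular basis $g_i$ of $\mathcal C_0$, which is all the construction requires. This completes the proof.
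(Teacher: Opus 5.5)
Your proof is correct. The paper does not reprove Kim's theorem; the nearest argument is the proof of Theorem~\ref{thm:binary-cz-kim}, items (3)--(5). There the boxed generator $G'$, with top row $(1,0,x)$ and remaining rows $(y_i,y_i,g_i)$, is a hypothesis: it is supplied by the Chinburg--Zhang realization plus a permutation and one row operation. The paper then runs exactly your computations: $r_0\cdot r_0=1+x\cdot x$, $r_0\cdot r_i=y_i+x\cdot g_i$, $r_i\cdot r_j=g_i\cdot g_j$, followed by independence of the $g_i$.

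The genuine difference is how that presentation is obtained. You build it directly from $e_1+e_2\notin\mathcal C=\mathcal C^\perp$: you take a codeword on which the nonzero functional $v\mapsto v_1+v_2$ equals $1$, together with a basis of its kernel. The paper instead imports it from the arithmetic boxed normal form. Your route is self-contained, and it shows that the only input needed is a single pair of coordinates $i,j$ with $e_i+e_j\notin\mathcal C$; $d>2$ guarantees this for every pair. In fact such a pair exists whenever $n\ge2$, since otherwise $\mathcal C$ would contain the even-weight code of dimension $2n-1>n$. This is consistent with Chinburg--Zhang covering all binary self-dual codes of length at least four. The paper's route is not aimed at existence; its purpose is to identify the Chinburg--Zhang boxed reduction as the exact inverse of Kim's extension.

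One small correction: your second appeal to $d>2$ in the independence step is unnecessary. The identity $y_i=\mathbf x\cdot g_i$, which you already proved, gives $s=\sum_i a_iy_i=\mathbf x\cdot\sum_i a_ig_i=0$ directly. This is exactly how the paper argues independence in Theorems~\ref{thm:binary-cz-kim} and~\ref{thm:qary-building-up}. So $d>2$ is used only once, to choose the pair, rather than being what makes the reduction well defined.
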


In 2012, Chinburg and Zhang~\cite{ChinburgZhang} proved that, up to
permutation equivalence, all binary self-dual codes of length at least four
arise from Hilbert pairings on rings of $S$-integers of $\mathbb Q$.

\begin{theorem}(\cite[Theorem 1.5]{ChinburgZhang})  \label{thm-binary-hilbert-2}
Up to permutation equivalence, all binary self-dual codes of length at least
four arise from Hilbert pairings on rings of $S$-integers of
$K=\mathbb Q$. In fact, each such code arises up to permutation equivalence
from infinitely many different subsets $S$ of the places of $K=\mathbb Q$.
\end{theorem}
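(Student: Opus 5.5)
The statement just given is Chinburg--Zhang's Theorem~1.5, a cited input, so the plan is to reconstruct their argument rather than to derive it from earlier results of this paper. The approach is to realize the standard Euclidean space $\F_2^{2n}$ as the target of a global-to-local map in étale cohomology. For $K=\mathbb Q$ and a finite set $S$ of places containing $2$ and $\infty$, consider
\[
H^1(\mathrm{Spec}\,\mathcal O_S,\mu_2)\longrightarrow \bigoplus_{v\in S} H^1(\mathbb Q_v,\mu_2)=\bigoplus_{v\in S}\mathbb Q_v^\times/\mathbb Q_v^{\times2},
\]
with the sum of the local Hilbert pairings on the right. The first step is to recall the Poitou--Tate/Artin--Verdier duality statement: under a suitable hypothesis on $S$ (for instance, trivial $2$-part of the $S$-class group, which holds for $\mathbb Q$), the image of this map is Lagrangian. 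I take this as the arithmetic input, via Poitou--Tate exactness together with the self-duality of $\mu_2$.

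Next I would choose $S=\{\infty,2\}\cup\{p_1,\dots,p_m\}$ with odd primes $p_j$ and compute the local pieces. For odd $p$, the group $\mathbb Q_p^\times/\mathbb Q_p^{\times2}$ is $2$-dimensional and carries a hyperbolic Hilbert pairing when $p\equiv1\pmod4$, and a non-hyperbolic but explicitly known one when $p\equiv3\pmod4$. The places $2$ and $\infty$ contribute a fixed small block. Using quadratic reciprocity, I would then pick primes so that the global image has a prescribed Gram and incidence structure, for example by choosing $p_j\equiv 1\pmod 8$ with prescribed Legendre symbols $(p_i/p_j)$ via Dirichlet's theorem.

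The core step, and the main obstacle, is to show that an arbitrary binary self-dual code $\mathcal C\subseteq\F_2^{2n}$ occurs. First, identify the total local space (after discarding a fixed auxiliary part) with a sum of pairs on which the pairing becomes Euclidean after a form isometry; this is exactly the isometry the paper later makes explicit. Then reduce by induction on $n$, using the building-up structure: by Theorem~\ref{thm-binary-build-2}, every code with $d>2$ comes from a length-$(2n-2)$ code. Codes with $d=2$ decompose, up to permutation, as $i_2\oplus\mathcal C'$. In each inductive step I would adjoin one or two new primes whose Legendre symbols against the previous primes encode the vector $\mathbf x$ and the coefficients $y_i$. Dirichlet's density theorem, applied with Chebotarev over $\mathbb Q(\sqrt{p_1},\dots,\sqrt{p_m},\sqrt{-1},\sqrt2)$, supplies infinitely many such primes; this gives the ``infinitely many $S$'' clause.

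The delicate points are verifying that the new global image is exactly Kim's extended code and not merely something containing it, and controlling the contribution of the places $2$ and $\infty$. For the first, a dimension count settles it: both codes are Lagrangian of the same dimension, so containment suffices. For the second, I would first try absorbing these places into a fixed hyperbolic block that is projected away, checking the base case of length four directly.
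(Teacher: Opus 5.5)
The paper does not reprove this statement; it cites Chinburg--Zhang and records only the shape of their argument. That shape is: $S=\{p_1,\dots,p_m,2,\infty\}$, local orthonormal bases $\{-p_i,p_i\}$, $\{-2,-10,-5\}$, $\{-1\}$, and a boxed matrix with diagonal blocks $01$, off-diagonal blocks $00$ or $11$ with opposite blocks summing to $11$, and bottom row all ones. Measured against that, your arithmetic setup fails at a concrete point: the primes you propose, $p_j\equiv1\pmod 8$, are the wrong ones. For $p\equiv1\pmod4$ one has $(a,a)_p=(a,-1)_p=1$ for every $a$. So $W_p$ is the alternating hyperbolic plane, which has no orthonormal basis and is not isometric to a Euclidean pair with Gram matrix $I_2$; this is exactly the point of Theorem~\ref{thm:binary-cz-kim}(2). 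Moreover, for $p_j\equiv1\pmod8$ the units $-1$ and $2$ are squares at $p_j$, and $p_j$ is a square at $2$ and $\infty$. Hence $L$ splits into a Lagrangian of an alternating space plus a piece inside $W_2\oplus W_\infty$, and there are no local Euclidean coordinates in which to read it as a code at all. The basis $\{-p,p\}$ is orthonormal exactly when $p\equiv3\pmod4$. In that case $(p_i/p_j)=-(p_j/p_i)$, together with the supplementary law at $2$, is precisely the rule that opposite blocks sum to $11$. Likewise, $W_2\oplus W_\infty$ is not a hyperbolic block to be projected away. It is non-alternating with orthonormal basis $\{-2,-10,-5\}\cup\{-1\}$ and carries the terminal pair. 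The global unit $-1$ maps to the all-ones vector, which every binary self-dual code contains, and $S=\{2,\infty\}$ is the length-four base case.

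Your inductive step also omits the decisive lemma. Adjoining a prime $p\equiv3\pmod4$ does perform a Kim extension, but only with a restricted vector. In the old boxed coordinates the new row must have $00$ or $11$ in every old vertex block and $10$ in the terminal pair. Reciprocity then \emph{forces} each $y_i$ (it agrees with $x\cdot g_i$) rather than letting you encode it. So Legendre symbols realize only $2^{n-2}$ vectors $x$, not the arbitrary $x$ with $x\cdot x=1$ supplied by Theorem~\ref{thm-binary-build-2}, and you must show these suffice. Two facts give this:
\begin{enumerate}[label=(\alph*)]
\item Replacing $x$ by $x+c$ with $c\in\mathcal C_0$ changes the child only by swapping the new coordinate pair.
\item The admissible vectors meet each of the $2^{n-2}$ odd-weight cosets of $\mathcal C_0$ exactly once, because in the boxed pairing the only nonzero codeword constant on every pair is $\mathbf 1$.
\end{enumerate}
Stated globally, this is the claim that every self-dual code of length at least four admits a coordinate pairing with $\mathcal C\cap\bigoplus_j\F_2(1,1)=\langle\mathbf 1\rangle$. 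Finally, the $d=2$ detour with ``one or two new primes'' does not fit, since each adjoined prime adds exactly two coordinates. It is also unnecessary. If $e_a+e_b\notin\mathcal C$ (such a pair exists once the length is at least four), then deleting $a,b$ from $\{v\in\mathcal C:v_a=v_b\}$ gives a self-dual parent of which $\mathcal C$ is a Kim child.
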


Here equivalence has the precise meaning used by Chinburg--Zhang: a
bijection of the selected orthonormal bases whose linear extension carries
one code to the other. Thus it is permutation equivalence, not general
monomial equivalence.

In the course of their proof, Chinburg--Zhang exhibit a boxed normal form
for the resulting binary self-dual code. Its bottom row is the all-ones
vector, its diagonal two-coordinate blocks are \(01\), and each remaining
off-diagonal block is \(00\) or \(11\), with opposite blocks summing to
\(11\). In the table, \(W_v=\mathbb Q_v^\times/(\mathbb Q_v^\times)^2\)
is the local square-class space with its Hilbert pairing; the column
headings list local bases and the rows list global \(S\)-units.

\begin{center}
\begin{tabular}{|c|ccccc|}
\toprule
$S$-units $\backslash$  places & $W_{p_1}$  & $W_{p_2}$ & $\cdots$ & $W_2$  & $W_{\mathbb R}$ \\
 & $\{-p_1, p_1\}$ & $\{-p_2, p_2\}$ & $\cdots$  & $\{-2, -10, -5 \}$ & $\{-1 \}$  \\
\midrule
$p_1$  & 01   & 00/11  &  & 00/11 ~~1  &  0 \\
$p_2$  & 11/00   & 01  &  & 00/11 ~~1  &  0 \\
$\vdots$ & $\vdots$ &   &  $\ddots$ &   & $\vdots$ \\
2  & 11/00   & 11/00  &   & 01 ~~~~~~~1  &  0 \\
$-1$ & 11      & 11 & $\cdots$ & 11 ~~~~~~ 1 &  1 \\
%-1  & 11   & 11  & $\cdots$ & 11~~1  &  0 \\
\bottomrule
\end{tabular}
\end{center}

The boxed reduction deletes one two-coordinate block and the corresponding
pivot row. The comparison with Kim's construction must distinguish two
bilinear forms. Let \(K\) be a number field, let \(S\) contain the
archimedean and dyadic places, and put
\[
U=\operatorname{Spec}\mathcal O_{K,S},\qquad
W=\bigoplus_{v\in S}H^1_{\et}(K_v,\mu_2),\qquad
L=\operatorname{im}\!\left(H^1_{\et}(U,\mu_2)\longrightarrow W\right).
\]
At real places the local groups use Tate cohomology. The sum of the local
Hilbert pairings defines a nondegenerate form \(B_{\rm coh}\) on \(W\),
and Chinburg--Zhang prove \(L=L^{\perp_{B_{\rm coh}}}\). Under their local
hypotheses, selected orthonormal bases give a form isometry
\[
\iota_E:(W,B_{\rm coh})\xrightarrow{\sim}
(\mathbb F_2^{2n},\langle\ ,\ \rangle_{\rm E}).
\]
Only after fixing \(\iota_E\) do we regard the arithmetic image as a
Euclidean binary code.

%\begin{blueblock}
\begin{theorem}[Euclidean realization and binary boxed reduction]
\label{thm:binary-cz-kim}
Let \(n\ge2\), let \(L\subset W\) be the Chinburg--Zhang arithmetic image,
and suppose that \(L=L^{\perp_{B_{\rm coh}}}\). Fix the form isometry
\(\iota_E\) above and put \(\mathcal C=\iota_E(L)\). Suppose that the coordinate
permutation furnishing the Chinburg--Zhang boxed representative, followed by
the indicated elementary row operation, gives
\[
G'=
\begingroup
\setlength{\arraycolsep}{6pt}
\renewcommand{\arraystretch}{1.2}
\left[
\begin{array}{cc!{\vrule width 1.2pt}c}
1&0&x\\
\noalign{\hrule height 1.2pt}
y_1&y_1&g_1\\
\vdots&\vdots&\vdots\\
y_{n-1}&y_{n-1}&g_{n-1}
\end{array}
\right],
\endgroup
\qquad x,g_i\in\mathbb F_2^{2n-2}.
\]
Let \(G\) have rows \(g_1,\ldots,g_{n-1}\), and put
\(\mathcal C'=\langle G\rangle_{\mathbb F_2}\). Then:
\begin{enumerate}[label=(\arabic*)]
\item \(L\) is Lagrangian for \(B_{\rm coh}\), while \(\mathcal C\) and the boxed
representative \(\langle G'\rangle_{\mathbb F_2}\) are Euclidean self-dual.
\item In the displayed Euclidean coordinates,
\[
E_2=\langle(1,0),(0,1)\rangle=\mathbb F_2^2,
\qquad
\mathbb F_2^{2n}=E_2\perp\mathbb F_2^{2n-2},
\qquad \operatorname{Gram}(E_2)=I_2.
\]
The plane \(E_2\) is non-alternating and is not isometric to the
alternating hyperbolic plane. Nevertheless,
\(\mathbb F_2(1,1)\subset E_2\) is isotropic.
\item The rows \(g_i\) are linearly independent and pairwise orthogonal;
hence \(\mathcal C'\) is Euclidean self-dual of length \(2n-2\).
\item One has \(x\cdot x=1\) and, for every \(i\),
\[
y_i=\langle x,g_i\rangle_{\rm E}.
\]
Consequently, \(G'\) is exactly the Kim building-up matrix obtained from
\(G\) and \(x\).
\item Deleting the top row and the first two coordinates of \(G'\) gives
\(G\), and rebuilding from \(G\) gives \(G'\) exactly. After restoring the
coordinate permutation, the Chinburg--Zhang reduction and Kim extension are
inverse up to permutation equivalence.
\end{enumerate}
\end{theorem}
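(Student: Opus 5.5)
We will prove the five items in order. Item (1) comes from transporting Lagrangian-ness through \(\iota_E\); items (2)–(5) are then elementary Gram-matrix computations on \(G'\).

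\emph{Item (1).} The identity \(L=L^{\perp_{B_{\rm coh}}}\) is the hypothesis, so \(L\) is Lagrangian for \(B_{\rm coh}\). Since \(\iota_E\) is a form isometry, it carries orthogonal complements to orthogonal complements:
\[
\iota_E\bigl(L^{\perp_{B_{\rm coh}}}\bigr)=\iota_E(L)^{\perp_{\rm E}} .
\]
Hence \(\mathcal C=\mathcal C^\perp\). The boxed representative is obtained from \(\mathcal C\) by a coordinate permutation followed by an elementary row operation. The row operation does not change the row space, so \(\langle G'\rangle_{\mathbb F_2}=\mathcal C P_\sigma\). This code is self-dual by \texttt{codeEquiv\_preserves\_paperSelfDualCode}.

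\emph{Item (2).} This is a direct check. The standard basis vectors \((1,0)\) and \((0,1)\) have Gram matrix \(I_2\), and they are orthogonal to the last \(2n-2\) coordinates, which gives the orthogonal splitting. The plane \(E_2\) is non-alternating because \(\langle(1,0),(1,0)\rangle=1\). An alternating hyperbolic plane has \(\langle v,v\rangle=0\) for all \(v\), and an isometry would preserve this property, so \(E_2\) is not isometric to it. Finally, \((1,1)\cdot(1,1)=1+1=0\), so \(\mathbb F_2(1,1)\) is isotropic.

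\emph{Items (3) and (4).} Write \(r_0=(1,0,x)\) for the top row and \(r_i=(y_i,y_i,g_i)\) for the others. By item (1) we have \(\dim\langle G'\rangle=n\), and \(G'\) has exactly \(n\) rows, so the rows of \(G'\) are linearly independent and pairwise orthogonal, each with zero self-product. We read off the required identities:
\begin{itemize}
\item \(0=r_0\cdot r_0=1+x\cdot x\), so \(x\cdot x=1\);
\item \(0=r_0\cdot r_i=y_i+\langle x,g_i\rangle\), so \(y_i=\langle x,g_i\rangle_{\rm E}\) (signs do not matter over \(\mathbb F_2\));
\item \(0=r_i\cdot r_j=y_iy_j+y_iy_j+g_i\cdot g_j=g_i\cdot g_j\) for all \(i,j\), since \(2y_iy_j=0\).
\end{itemize}
For the linear independence of the \(g_i\), suppose \(\sum_{i\in T} g_i=0\) for a nonempty index set \(T\). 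Then \(\sum_{i\in T} r_i=(s,s,0)\) with \(s=\sum_{i\in T} y_i\). Independence of the \(r_i\) forces \(s=1\), so \((1,1,0)\in\langle G'\rangle\). But \((1,1,0)\cdot r_0=1\), which contradicts total isotropy.

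Thus \(\mathcal C'\) is totally isotropic of dimension \(n-1\) in \(\mathbb F_2^{2n-2}\), and it is self-dual by the Lagrangian criterion following Definition~\ref{def:lagrangian}. Together with \(x\cdot x=1\) and \(y_i=\langle x,g_i\rangle_{\rm E}\), this shows that \(G'\) is exactly the matrix of Theorem~\ref{thm-binary-build-1} built from \(G\) and \(x\).

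\emph{Item (5).} Deleting the top row and the first two coordinates of \(G'\) visibly gives \(G\). Rebuilding from \(G\) and \(x\) uses the coefficients \(\langle x,g_i\rangle=y_i\), so it reproduces \(G'\) entry by entry. Undoing \(P_\sigma\) then recovers \(\mathcal C\) up to permutation equivalence, which gives the inverse statement.

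\emph{Main obstacle.} The delicate step is the independence argument for the \(g_i\), together with keeping the two bilinear forms separate so that nothing about \(B_{\rm coh}\) beyond the isometry \(\iota_E\) is used. I would handle independence by the vector \((1,1,0)\) argument above, which uses only the isotropy of \(r_0\) against \((1,1,0)\). Everything else reduces to Gram-matrix arithmetic over \(\mathbb F_2\).
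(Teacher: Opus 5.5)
Your proposal is correct and follows the paper's proof essentially step by step. It transports self-duality through the isometry \(\iota_E\) and the permutation \(P_\sigma\), reads off \(x\cdot x=1\), \(y_i=\langle x,g_i\rangle_{\rm E}\) and \(g_i\cdot g_j=0\) from the products \(r_0\cdot r_0\), \(r_0\cdot r_i\), \(r_i\cdot r_j\), and compares with Kim's matrix row by row. The only cosmetic difference is in the independence of the \(g_i\): you argue by contradiction via \((1,1,0)\cdot r_0=1\), whereas the paper uses \(\sum_i\lambda_i y_i=\langle x,\sum_i\lambda_i g_i\rangle_{\rm E}=0\), and both rest on the same orthogonality against \(r_0\).
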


\begin{proof}
Let \(P_\sigma\) denote the selected permutation matrix and define
\(\eta(u)=\iota_E(u)P_\sigma\). Since right multiplication by
\(P_\sigma\) and \(\iota_E\) are form isometries,
\[
\eta(u)\cdot\eta(v)=B_{\rm coh}(u,v)
\qquad(u,v\in W).
\]
Thus \(\mathcal C_\sigma:=\eta(L)=\langle G'\rangle_{\mathbb F_2}\) is Euclidean
self-dual and has dimension \(n\). Since \(\iota_E\) is also a form
isometry,
\[
\mathcal C=\mathcal C^\perp,
\qquad
\mathcal C_\sigma=\mathcal C_\sigma^\perp.
\]
This proves (1).

The first two standard coordinate vectors \((1,0)\) and \((0,1)\) span the
plane \(E_2\) with Gram matrix \(I_2\). In particular,
\[
(1,0)\cdot(1,0)=1,\qquad
(1,1)\cdot(1,1)=0,\qquad
(1,0)\cdot(1,1)=1.
\]
Hence \(E_2\) is non-alternating and contains the isotropic line
\(\mathbb F_2(1,1)\). The form with Gram matrix
\(\left(\begin{smallmatrix}0&1\\1&0\end{smallmatrix}\right)\) is
alternating. Since alternation is invariant under isometry, the two planes
are not isometric. This proves (2).

The rows of \(G'\) are independent and pairwise orthogonal. Writing
\(r_0=(1,0,x)\) and \(r_i=(y_i,y_i,g_i)\), we obtain
\[
0=r_0\cdot r_0=1+x\cdot x,
\qquad
0=r_0\cdot r_i=y_i+x\cdot g_i,
\]
and therefore
\[
x\cdot x=1,
\qquad y_i=x\cdot g_i.
\]
Moreover, for all \(i,j\),
\[
0=r_i\cdot r_j=2y_iy_j+g_i\cdot g_j=g_i\cdot g_j.
\]
If \(\sum_i\lambda_i g_i=0\), then
\[
\sum_i\lambda_i y_i
=x\cdot\sum_i\lambda_i g_i=0,
\]
so \(\sum_i\lambda_i r_i=0\). Independence of the rows of \(G'\) gives
\(\lambda_i=0\) for all \(i\). Thus the \(g_i\) are independent and
pairwise orthogonal. Since their number is \(n-1=(2n-2)/2\), their row
space is self-dual. This proves (3), and the coefficient identities prove
(4).

It remains to verify the inverse relation in (5).  The first row of the
Kim matrix constructed from \(x\) and \(G\) is \((1,0,x)=r_0\), and its
row corresponding to \(g_i\) is
\[
 \bigl(x\mathbin{\cdot}g_i,x\mathbin{\cdot}g_i,g_i\bigr)
   =(y_i,y_i,g_i)=r_i.
\]
Hence the constructed matrix is equal to \(G'\) row by row.  Conversely,
deleting \(r_0\) and the first two coordinates sends every remaining row
\(r_i\) to \(g_i\), and therefore recovers \(G\).  Thus reduction followed
by building-up gives \(G'\) exactly.  Since
\(\langle G'\rangle_{\mathbb F_2}=\mathcal C P_\sigma\), right multiplication by
\(P_\sigma^{-1}\) shows that the reconstructed code is permutation
equivalent to \(\mathcal C\), as
claimed.
\end{proof}

\begin{remark}{\rm
The form isometry \(\iota_E\) and the coordinate permutation \(P_\sigma\)
play distinct roles.  The former transports \(B_{\rm coh}\) to the Euclidean
dot product; the latter preserves that dot product.  Neither identifies the
deleted Euclidean plane with the alternating hyperbolic plane.
}
\end{remark}

\noindent
In the following listing, \texttt{e} denotes the composite
\(\eta(u)=\iota_E(u)P_\sigma\), so \texttt{hcode} identifies the actual
permuted image of \(L\) with the displayed boxed family. The equality
\(\operatorname{Gram}(E_2)=I_2\) and the fact that a non-alternating plane
cannot be isometric to an alternating plane establish item~(2). The Lean
development formalizes both facts, while
\texttt{paper\_binary\_cz\_kim\_}\allowbreak\texttt{corrected} formalizes the remaining
reduction and reconstruction statements.

\begin{lstlisting}[caption={Lean for Theorem~\ref{thm:binary-cz-kim}.}]
theorem paper_binary_cz_kim_corrected [CharP K 2]
    (he : IsFormIsometry Bcoh dotBilin e)
    (hL : IsSelfOrthogonal Bcoh L)
    (hcode : L.map e.toLinearMap =
      rowSpace (boxedFamily x Y G)) :
    IsSelfOrthogonal dotBilin (L.map e.toLinearMap) ∧
    dot x x = (1 : K) ∧
    paperSelfDualCode (rowSpace G) ∧
    (∀ i, Y i = dot x (G i)) ∧
    boxedFamily x Y G = buildRowsBin x G ∧
    deleteBinaryHeadPair (boxedFamily x Y G) = G ∧
    CodeEquiv
      (buildRowsBin x (deleteBinaryHeadPair (boxedFamily x Y G)))
      (boxedFamily x Y G)
\end{lstlisting}
%\end{blueblock}

\subsection{\texorpdfstring{\hspace{0.2em}Split \(q\)-ary Case}{Split q-ary Case}}

Motivated by the building-up construction of binary self-dual codes, Kim and Lee~\cite{KimLee} generalized it to finite fields with $q$ elements where $q \equiv 1 \pmod{4}$ as follows.

\begin{theorem}(\cite[Proposition 2.1]{KimLee}) \label{thm-q-ary-build-1}
Let $\mathbb F_q$ be a finite field with $q$ elements. Assume that $q \equiv 1 \pmod{4}$. Let $c$ be in $\mathbb F_q$ such that $c^2 = -1$. Let $G_0$ be a generator matrix of a Euclidean self-dual code $\mathcal C_0$ over $\mathbb F_q$ of length $2n$, whose rows are $g_1, \dots, g_n$. Let $\bf x$ be a vector in $\mathbb F_q^{2n}$ satisfying ${\bf x} \cdot {\bf x} = -1$. Suppose that $y_i ={\bf x} \cdot g_i$ for $1 \le i \le n$. Then the following matrix
\[
\begingroup
\setlength{\arraycolsep}{6pt}
\renewcommand{\arraystretch}{1.2}
G=
\left[
\begin{array}{cc!{\vrule width 1.2pt}c}
1 & 0 & \mathbf{x}\\
\noalign{\hrule height 1.2pt}
-y_1 & c y_1 & g_1\\
\vdots & \vdots & \vdots\\
-y_n & c y_n & g_n
\end{array}
\right]
\endgroup
\]
generates a self-dual code $\mathcal C$ over $\mathbb F_q$ of length $2n+2$.
\end{theorem}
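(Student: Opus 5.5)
We verify self-duality through the dimension criterion recalled after Definition~\ref{def:lagrangian}: a code in $\F_q^{2n+2}$ is self-dual if and only if it is totally isotropic and has dimension $n+1$. The plan has two steps. First we check that the $n+1$ rows of $G$ are pairwise orthogonal, including each row with itself. Then we show that they are linearly independent.

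Write $r_0=(1,0,\mathbf x)$ and $r_i=(-y_i,cy_i,g_i)$. The Gram entries are direct computations.
\[
r_0\cdot r_0 = 1+\mathbf x\cdot\mathbf x = 1-1=0,\qquad
r_0\cdot r_i = -y_i+\mathbf x\cdot g_i = 0,
\]
and, for all $i,j$,
\[
r_i\cdot r_j = y_iy_j + c^2y_iy_j + g_i\cdot g_j = (1+c^2)y_iy_j + 0 = 0 .
\]
This uses $c^2=-1$ and the self-duality of $\mathcal C_0$. The last identity is exactly the isotropy of the line $\mathcal I_c=\F_q(1,c)$. In the language of Proposition~\ref{prop:eucl-hyp}, the head block of $r_i$ is $-y_i(1,c)$, and this vector is isotropic. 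Hence the row space of $G$ is totally isotropic.

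Independence is the only step needing an argument beyond computation, and it is where the hypothesis $\mathbf x\cdot\mathbf x=-1$ gets used a second time. Suppose $\lambda_0r_0+\sum_i\lambda_ir_i=0$. The second coordinate gives $c\sum_i\lambda_iy_i=0$, and since $c\ne0$ this yields $\sum_i\lambda_iy_i=0$. The first coordinate then gives $\lambda_0=\sum_i\lambda_iy_i=0$. The tail coordinates now give $\sum_i\lambda_ig_i=0$, so all $\lambda_i=0$ because $g_1,\dots,g_n$ form a basis of $\mathcal C_0$. Thus $\dim\langle G\rangle=n+1=(2n+2)/2$, and the criterion gives self-duality.

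The argument contains no genuine obstacle; the only care needed is in the independence step, where the top row must be separated out using the two head coordinates. An alternative would be to pair a putative dependency with a suitable vector, but reading off the head coordinates directly is simplest and matches the binary argument in the proof of Theorem~\ref{thm:binary-cz-kim}.
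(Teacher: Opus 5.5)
Your proof is correct and is essentially the paper's argument for Theorem~\ref{thm:qary-building-up}(5a), which is this statement with $c$ replaced by $-c$. The paper does the same Gram computation, and it gets independence by observing that combinations of the $r_i$ have heads on an isotropic line not containing $(1,0)$ and then projecting to the tail, which is your coordinate read-off in other words. Two cosmetic slips: $\mathbf x\cdot\mathbf x=-1$ is not used in the independence step, and with this sign convention the head of $r_i$ is $-y_i(1,-c)$, not $-y_i(1,c)$ (it is still isotropic, since $(-c)^2=-1$).
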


\begin{remark}
\rm
Replacing \(c\) by \(-c\) preserves \(c^2=-1\).  Hence we use the equivalent
normalization \((-y_i,-cy_i,g_i)\) in the sequel.
\end{remark}

 Then the converse of the above theorem was also stated and proved in \cite{KimLee}.

\begin{theorem}(\cite[Proposition 2.2]{KimLee}) \label{thm-q-ary-build-2}
Let $\mathbb F_q$ be a finite field with $q$ elements. Assume that $q \equiv 1 \pmod{4}$. Any Euclidean self-dual code $\mathcal C$ over $\mathbb F_q$ of length $2n$ with minimum distance $d >2$ is obtained from some Euclidean self-dual code $\mathcal C_0$ over $\mathbb F_q$ of length $2n-2$ (up to permutation equivalence) by the construction method in Theorem~\ref{thm-q-ary-build-1}.
\end{theorem}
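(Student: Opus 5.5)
Let \(\mathcal C\subseteq\F_q^{2n}\) be Euclidean self-dual with \(d>2\), and fix \(c\) with \(c^2=-1\). The plan is to locate a suitable codeword, normalize it to the form \((1,0,\mathbf x)\) after a coordinate permutation, and then read off \(\mathcal C_0\) from the remaining rows exactly as in the binary reduction of Theorem~\ref{thm:binary-cz-kim}. Since \(\dim\mathcal C=n\) and \(\mathcal C\cap\{u: u_1=u_2=0\}\) has codimension at most \(2\), we first show that the projection \(\pi:\mathcal C\to\F_q^2\) onto the first two coordinates is surjective. If it were not, its image would be a line or zero. Then \(\mathcal C\) would contain the full orthogonal complement of that image placed in the first two coordinates, because \(\mathcal C=\mathcal C^\perp\). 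Concretely, any vector \((a,b,0,\dots,0)\) orthogonal to \(\pi(\mathcal C)\) lies in \(\mathcal C^\perp=\mathcal C\). This produces a nonzero codeword of weight \(\le2\), contradicting \(d>2\). This holds for every pair of coordinates, so no permutation is actually needed beyond choosing the pair.

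By surjectivity, pick \(r_0\in\mathcal C\) with \(\pi(r_0)=(1,0)\), and write \(r_0=(1,0,\mathbf x)\). Self-orthogonality gives \(1+\mathbf x\cdot\mathbf x=0\), i.e.\ \(\mathbf x\cdot\mathbf x=-1\). Next consider the subspace \(\mathcal D=\{u\in\mathcal C:\ u\cdot r_0=0\}\cap \pi^{-1}(\mathcal I_{-c})\), that is, the codewords whose head lies on the isotropic line spanned by \((1,c)\) (with the sign normalization of the Remark). Since \(\pi\) is onto, \(\pi^{-1}(\mathcal I)\cap\mathcal C\) has dimension \(n-1\). Every codeword is orthogonal to \(r_0\), so we get an \((n-1)\)-dimensional space \(\mathcal D\) with \(\mathcal C=\F_q r_0\oplus\mathcal D\), because \(r_0\) has head \((1,0)\notin\mathcal I\). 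Choose a basis \(r_i=(-y_i,-cy_i,g_i)\), \(1\le i\le n-1\), of \(\mathcal D\).

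Then we verify the coefficient identities. Orthogonality \(r_0\cdot r_i=0\) gives \(-y_i+\mathbf x\cdot g_i=0\), so \(y_i=\mathbf x\cdot g_i\). Orthogonality \(r_i\cdot r_j=0\) gives \(y_iy_j(1+c^2)+g_i\cdot g_j=g_i\cdot g_j=0\), so the \(g_i\) are pairwise orthogonal. For independence, the argument is the same as in Theorem~\ref{thm:binary-cz-kim}(3): if \(\sum\lambda_ig_i=0\), then \(\sum\lambda_iy_i=\mathbf x\cdot\sum\lambda_ig_i=0\), so \(\sum\lambda_ir_i=0\), and hence all \(\lambda_i=0\). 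Therefore \(\mathcal C_0=\langle g_1,\dots,g_{n-1}\rangle\) is totally isotropic of dimension \(n-1\) in \(\F_q^{2n-2}\), and so it is self-dual by the Lagrangian criterion. The generator matrix with rows \(r_0,r_1,\dots,r_{n-1}\) is then literally the matrix of Theorem~\ref{thm-q-ary-build-1} (in the \(-c\) normalization) built from \(\mathcal C_0\) and \(\mathbf x\). Hence \(\mathcal C\) is obtained by that construction, up to the coordinate permutation that brings the chosen pair to the front.

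The main obstacle is the surjectivity step, which is the only place where \(d>2\) enters. There one must argue carefully that \(\mathcal C^\perp=\mathcal C\) forces the vectors \((a,b,0,\ldots,0)\) orthogonal to \(\pi(\mathcal C)\) into \(\mathcal C\). This needs \(\pi(\mathcal C)\neq\F_q^2\) to leave at least a one-dimensional orthogonal complement in \(\F_q^2\), which holds because the Euclidean form on \(\F_q^2\) is nondegenerate.
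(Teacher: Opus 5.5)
Your proof is correct. The paper does not prove this statement itself; it is quoted from Kim--Lee. The closest argument in the paper, the proof of Theorem~\ref{thm:qary-building-up}, coincides with your second and third paragraphs: $\mathbf x\cdot\mathbf x=-1$, $y_i=\mathbf x\cdot g_i$, $g_i\cdot g_j=0$ because $1+c^2=0$, independence of the $g_i$ via $\sum_i\lambda_iy_i=\mathbf x\cdot\sum_i\lambda_ig_i$, and the half-dimension criterion for self-duality.

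The real difference is that Theorem~\ref{thm:qary-building-up} \emph{assumes} the adapted generator matrix, whereas you derive it from $d>2$. Since $\mathcal C^\perp=\mathcal C$, a non-surjective projection onto two coordinates would force a nonzero vector of weight at most $2$ into $\mathcal C$. Equivalently, any two columns of a generator matrix are linearly independent. Surjectivity then gives both the row $(1,0,\mathbf x)$ and the $(n-1)$-dimensional complement of codewords whose heads lie on $\F_q(1,c)$. This lemma is exactly what turns the paper's conditional reduction into the unconditional Kim--Lee converse. It also shows that every coordinate pair works, so no permutation is actually needed.

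Two cosmetic points:
\begin{enumerate}
\item In the paper's notation the line you use is $\mathcal I_c=\F_q(1,c)$, not $\mathcal I_{-c}$.
\item Matching the literal rows $(-y_i,cy_i,g_i)$ of Theorem~\ref{thm-q-ary-build-1} just means running that construction with the square root $-c$ of $-1$. Its hypotheses allow this.
\end{enumerate}
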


This naturally leads to the following question: is there a split \(q\)-ary analogue of the Chinburg--Zhang boxed reduction for self-dual codes over \(\mathbb F_q\) when \(q\equiv 1 \pmod 4\)?

The next theorem gives the adapted split normal form that underlies the \(q\)-ary forward and reverse calculations. It identifies the hyperbolic decomposition, the shorter parent code, and the Kim--Lee correction formula once the self-dual code is presented in the displayed split form.

%\begin{blueblock}
\begin{theorem}[Adapted split reduction and building-up extension over $\mathbb{F}_q$]\label{thm:qary-building-up}
Let $q$ be a prime power with $q \equiv 1 \pmod{4}$, and let $K=\mathbb{F}_q$.
Fix $c \in K^\times$ such that $c^2=-1$.
Let $W=K^{2n}$ be equipped with the standard Euclidean inner product
\[
\langle u,v\rangle = \sum_{i=1}^{2n} u_i v_i.
\]
Let $L \subset W$ be a self-dual code, i.e.,\ $L=L^\perp$. Assume that,
after a permutation of coordinates and row operations, \(L\) admits a
generator matrix of the form
\[
\begingroup
\setlength{\arraycolsep}{6pt}
\renewcommand{\arraystretch}{1.2}
M=
\left[
\begin{array}{cc!{\vrule width 1.2pt}c}
1 & 0 & x\\
\noalign{\hrule height 1.2pt}
-y_1 & -c y_1 & g_1\\
\vdots & \vdots & \vdots\\
-y_{n-1} & -c y_{n-1} & g_{n-1}
\end{array}
\right]
\endgroup,
\]
where $x \in K^{2n-2}$ and $g_i \in K^{2n-2}$ for $1 \le i \le n-1$.

Let $W_0 = K^{2n-2}$ and define $L_0 := \langle g_1,\dots,g_{n-1}\rangle \subset W_0$.
Then the following statements hold:
\begin{enumerate}[label=(\arabic*)]
\item Let \(E=K^2\) denote the first coordinate plane in the displayed
decomposition of \(W\).
Then \(W=E\perp W_0\) in the displayed coordinates, and \(E\) is
isometric to a hyperbolic plane.  This is an abstract change of basis in
\(E\); no coordinate change is applied to the displayed matrix \(M\).
\item $L_0$ is self-dual in $W_0$.
\item One has
\[
\langle x,x\rangle=-1,
\qquad
y_i = \langle x, g_i\rangle\quad(1\le i\le n-1).
\]
\item Consequently,
\[
L = \Bigl\langle(1,0,x),\;
(-\langle x,g_i\rangle,\,-c\langle x,g_i\rangle,\,g_i)\;:\;1 \le i \le n-1\Bigr\rangle_K,
\]
so that $L$ is obtained from $L_0$ by a $q$-ary building-up construction.
\item Conversely, two reverse statements hold:
\begin{enumerate}[label=(5\alph*)]
\item Let $L_0 \subset K^{2n-2}$ be self-dual, let $g_1,\dots,g_{n-1}$ be a basis of $L_0$, and let $x \in K^{2n-2}$ satisfy
\[
\langle x,x\rangle = -1,
\]
then the row family generated by
\[
r_0 = (1,0,x), \qquad
r_i = (-\langle x,g_i\rangle,\,-c\langle x,g_i\rangle,\,g_i)
\]
is linearly independent and generates a self-dual code of length $2n$.
\item More generally, if one starts from any generating family of a self-dual parent, the same formula always produces a self-orthogonal child family, and that child is self-dual as soon as its row span has dimension $n=\frac12 \dim W$.
\end{enumerate}
\end{enumerate}
\end{theorem}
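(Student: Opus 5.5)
The plan is to reuse the argument of Theorem~\ref{thm:binary-cz-kim} verbatim, with the binary isotropic vector \((1,1)\) replaced by \((1,c)\). The key input is the identity \(1+c^2=0\). First I reduce to the displayed matrix. Coordinate permutations are Euclidean isometries (Definition~\ref{def:perm-equiv} and the remark after it), and row operations do not change the span. So it suffices to treat \(L=\langle M\rangle_K\) itself, which is self-dual of dimension \(n\). The rows of \(M\) are therefore linearly independent and pairwise orthogonal.

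For (1), the decomposition \(W=E\perp W_0\) holds because the first two standard basis vectors are orthogonal to the last \(2n-2\). The hyperbolic structure of \(E\) is exactly Proposition~\ref{prop:eucl-hyp}: \(e_1=(1,c)\) and \(e_2=2^{-1}(1,-c)\) form a hyperbolic pair, using that \(q\) is odd.

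For (3), write \(r_0=(1,0,x)\) and \(r_i=(-y_i,-cy_i,g_i)\). Expanding the Gram relations gives
\[
0=r_0\cdot r_0=1+\langle x,x\rangle,\qquad
0=r_0\cdot r_i=-y_i+\langle x,g_i\rangle .
\]
Hence \(\langle x,x\rangle=-1\) and \(y_i=\langle x,g_i\rangle\). Also
\[
0=r_i\cdot r_j=y_iy_j(1+c^2)+\langle g_i,g_j\rangle=\langle g_i,g_j\rangle,
\]
so the \(g_i\) are pairwise orthogonal.

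For (2), the remaining point is independence of the \(g_i\). Suppose \(\sum\lambda_ig_i=0\). Then \(\sum\lambda_iy_i=\langle x,\sum\lambda_ig_i\rangle=0\), and therefore \(\sum\lambda_ir_i=0\). Independence of the rows of \(M\) forces all \(\lambda_i=0\). Thus \(L_0\) is totally isotropic of dimension \(n-1=\tfrac12(2n-2)\), and it is self-dual by the Lagrangian criterion following Definition~\ref{def:lagrangian}. Item (4) is then immediate: substitute \(y_i=\langle x,g_i\rangle\) into \(M\).

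For (5b), take any generating family \(g_i\) of a self-dual parent and any \(x\) with \(\langle x,x\rangle=-1\). The same three computations, read in reverse, give
\[
r_0\cdot r_0=0,\qquad r_0\cdot r_i=-\langle x,g_i\rangle+\langle x,g_i\rangle=0,\qquad r_i\cdot r_j=\langle g_i,g_j\rangle=0 .
\]
So the child family is self-orthogonal. If its span has dimension \(n\), it is self-dual by the same dimension criterion.

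For (5a), it remains to show that the span has dimension \(n\) when the \(g_i\) form a basis. Suppose \(\mu r_0+\sum\lambda_ir_i=0\). The second coordinate gives \(-c\sum\lambda_i\langle x,g_i\rangle=0\), and the first gives \(\mu=\sum\lambda_i\langle x,g_i\rangle\). Since \(c\neq0\), this yields \(\mu=0\). The tail then reads \(\sum\lambda_ig_i=0\), and the basis hypothesis gives \(\lambda_i=0\). Now apply (5b).

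I expect no genuine obstacle. The one delicate point is bookkeeping in (5a): \(\mu\) must be eliminated using the first two coordinates together. The second coordinate forces \(\sum\lambda_iy_i=0\), and this uses \(c\neq 0\). Given that, the first coordinate forces \(\mu=0\). Note that \(c\in K^\times\) and \(2\neq 0\) are exactly the hypotheses inherited from \(q\equiv1\pmod4\).
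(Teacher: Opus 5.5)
Your proposal is correct and follows the paper's proof: the same Gram computations using $1+c^2=0$, the same transfer of independence from the rows of $M$ to the $g_i$, and the same half-dimension criterion. In (5a), your elimination (the second coordinate and $c\neq0$ give $\sum\lambda_i y_i=0$, hence $\mu=0$) is an explicit coordinate form of the paper's observation. That observation is that successor rows have their first two coordinates in $K(1,c)$, while $(1,0)\notin K(1,c)$.
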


\begin{proof}
The decomposition \(W=E\perp W_0\), with \(E=K^2\), is
the literal decomposition into the first two and the remaining coordinates.
The standard basis of \(E\) has Gram matrix \(I_2\).  Since \(2\ne0\), the
vectors
\[
h_1=(1,c),\qquad h_2=\frac12(1,-c)
\]
satisfy
\(\langle h_1,h_1\rangle=\langle h_2,h_2\rangle=0\) and
\(\langle h_1,h_2\rangle=1\).  Thus \(E\) is abstractly hyperbolic,
proving (1), while \(M\) remains written in the standard coordinate basis.

Write
\[
r_0=(1,0,x),\qquad r_i=(-y_i,-cy_i,g_i).
\]
Self-duality of \(L\) gives pairwise orthogonality of these rows.  Hence
\[
0=\langle r_0,r_0\rangle=1+\langle x,x\rangle,
\qquad
0=\langle r_0,r_i\rangle=-y_i+\langle x,g_i\rangle,
\]
and
\[
0=\langle r_i,r_j\rangle=(1+c^2)y_iy_j+\langle g_i,g_j\rangle.
\]
Since \(c^2=-1\), it follows that
\[
\langle x,x\rangle=-1,
\qquad
y_i=\langle x,g_i\rangle,
\qquad \langle g_i,g_j\rangle=0
\quad(1\le i,j\le n-1).
\]

The code \(L\) has dimension \(n\), and the displayed \(n\) rows generate it.
They are therefore linearly independent.
Suppose that
\(\sum_i\lambda_i g_i=0\).  The coefficient identity just proved gives
\[
\sum_i\lambda_i y_i
=\left\langle x,\sum_i\lambda_i g_i\right\rangle=0,
\]
and therefore
\[
\sum_i\lambda_i r_i
=\left(-\sum_i\lambda_i y_i,
       -c\sum_i\lambda_i y_i,
       \sum_i\lambda_i g_i\right)=0.
\]
The independence of the rows of \(M\) implies \(\lambda_i=0\) for every
\(i\).  Thus \(g_1,\ldots,g_{n-1}\) are linearly independent.  Their span
\(L_0\) is self-orthogonal and has dimension
\(n-1=\frac12\dim W_0\), so \(L_0=L_0^\perp\).  This proves (2) and (3),
and substitution of the coefficient identity into the rows proves (4).

Conversely, let \(L_0\) be self-dual, put
\(y_i=\langle x,g_i\rangle\), and define \(r_0,r_i\) as above.  Directly,
\[
\langle r_0,r_0\rangle=1+\langle x,x\rangle=0,
\quad
\langle r_0,r_i\rangle=-y_i+\langle x,g_i\rangle=0,
\]
and
\[
\langle r_i,r_j\rangle
=(1+c^2)y_iy_j+\langle g_i,g_j\rangle=0.
\]
Thus the child row space is self-orthogonal for every generating family.
If \(g_1,\ldots,g_{n-1}\) is a basis, projection onto the last
\(2n-2\) coordinates proves that \(r_1,\ldots,r_{n-1}\) are independent.
Moreover, every linear combination of these successor rows has its first
two coordinates in \(K(1,c)\), whereas \((1,0)\notin K(1,c)\).  Hence
\(r_0\) is not in their span.  The child therefore has dimension \(n\) and
is self-dual.  For an arbitrary generating family, the same Gram calculation
and the stated half-dimension hypothesis give self-duality.
\end{proof}

\noindent
The following Lean theorem is the exact reverse statement used above.  In
particular, tail independence and its finrank conclusion precede the
self-duality conclusion; no comparison of vectors in different ambient
spaces occurs.

\begin{lstlisting}[caption={Lean for Theorem~\ref{thm:qary-building-up}.}]
theorem paper_qary_adapted_reduction
    (hc : c ^ 2 = (-1 : K))
    (hself : paperSelfDualCode
      (rowSpace (qaryAdaptedFamily x c Y G))) :
    dot x x = (-1 : K) ∧
      (∀ i, Y i = dot x (G i)) ∧
      LinearIndependent K (qaryAdaptedFamily x c Y G) ∧
      LinearIndependent K G ∧
      Module.finrank K (rowSpace G) = m ∧
      paperSelfDualCode (rowSpace G) ∧
      qaryAdaptedFamily x c Y G = buildRows x c G
\end{lstlisting}
%\end{blueblock}

\begin{remark}
{\rm
For a basis \(g_1,\ldots,g_{n-1}\) of the parent, the projection onto the
last \(2n-2\) coordinates proves independence of the successor rows.  For an
arbitrary generating family, the same Gram calculation gives only
self-orthogonality; the half-dimension hypothesis in
Theorem~\ref{thm:qary-building-up}(5b) is then necessary.
}
\end{remark}

\medskip

\begin{remark}[Relation with Chinburg--Zhang]
{\rm
Theorem~\ref{thm:qary-building-up} is a statement about the Euclidean form
over \(\F_q\); it does not assert a \(\mu_2\)-cohomological realization.  Its
analogy with Theorem~\ref{thm:binary-cz-kim} is the common two-coordinate
reduction governed by an isotropic line.
}
\end{remark}

Theorem~\ref{thm:qary-building-up} is therefore the one-step adapted split theorem. It identifies the shorter parent code and recovers the Kim--Lee correction coefficients from the displayed hyperbolic head. The next theorem packages the same mechanism into a boxed family, designed to play for \(q\equiv 1\pmod 4\) the same structural role that boxed matrices play in the binary theorem of Chinburg--Zhang. The key difference is that in odd characteristic the last block of a non-final row can no longer be frozen to \(10\); instead it must vary through norm \(-1\) vectors in the distinguished split \(2\)-plane, with the last row and the off-diagonal isotropic blocks adjusted accordingly.

\begin{theorem}[Split boxed form over \texorpdfstring{$\F_q$}{Fq}]
\label{thm:split-boxed-form}
Fix \(c\in K^\times\) with \(c^2=-1\). For \(1\le i\le n-1\), let \(\ell_i=(u_i,v_i)\in K^2\) and \(a_i\in K\), and for distinct \(1\le i,j\le n-1\), let \(b_{ij}\in K\). Let \(M\) be the \(n\times 2n\) generator matrix whose \(n\times n\) block form \(\widetilde M=(B_{ij})\) of \(1\times 2\) blocks is determined by
\[
B_{ii}=(0,1)\quad (1\le i\le n-1),\qquad B_{nn}=(1,c),
\]
\[
B_{in}=\ell_i \quad (1\le i\le n-1),\qquad
B_{ni}=a_i(1,c)\quad (1\le i\le n-1),
\]
and
\[
B_{ij}=b_{ij}(1,c)\qquad (1\le i\neq j\le n-1).
\]
Thus \(\widetilde M\) has the split boxed shape
\[
\begingroup
\setlength{\arraycolsep}{9pt}
\renewcommand{\arraystretch}{1.72}
\widetilde M=
\left[
\begin{array}{c|cccc|c}
01 & b_{12}(1,c) & b_{13}(1,c) & \cdots & b_{1,n-1}(1,c) & \ell_1 \\
\hline
b_{21}(1,c) & 01 & b_{23}(1,c) & \cdots & b_{2,n-1}(1,c) & \ell_2 \\
b_{31}(1,c) & b_{32}(1,c) & 01 & \cdots & b_{3,n-1}(1,c) & \ell_3 \\
\vdots & \vdots & \vdots & \ddots & \vdots & \vdots \\
b_{n-1,1}(1,c) & b_{n-1,2}(1,c) & b_{n-1,3}(1,c) & \cdots & 01 & \ell_{n-1} \\
\hline
a_1(1,c) & a_2(1,c) & a_3(1,c) & \cdots & a_{n-1}(1,c) & (1,c)
\end{array}
\right],
\endgroup
\]

and the corresponding ordinary generator matrix \(M\) is
\[
\makebox[\textwidth][c]{$
\begingroup
\setlength{\arraycolsep}{4pt}
\renewcommand{\arraystretch}{1.72}
M=
\left[
\begin{array}{c!{\vrule width 1.2pt}c!{\vrule width 0.5pt}cccc!{\vrule width 0.8pt}c}
\begin{matrix}
0 & 1
\end{matrix}
&
\begin{matrix}
b_{12} & c b_{12}
\end{matrix}
&
\begin{matrix}
b_{13} & c b_{13}
\end{matrix}
&
\begin{matrix}
b_{14} & c b_{14}
\end{matrix}
&
\begin{matrix}
\cdots
\end{matrix}
&
\begin{matrix}
b_{1,n-1} & c b_{1,n-1}
\end{matrix}
&
\begin{matrix}
\ell_1
\end{matrix}
\\
\noalign{\hrule height 1.2pt}
\begin{matrix}
b_{21} & c b_{21}
\end{matrix}
&
\begin{matrix}
0 & 1
\end{matrix}
&
\begin{matrix}
b_{23} & c b_{23}
\end{matrix}
&
\begin{matrix}
b_{24} & c b_{24}
\end{matrix}
&
\begin{matrix}
\cdots
\end{matrix}
&
\begin{matrix}
b_{2,n-1} & c b_{2,n-1}
\end{matrix}
&
\begin{matrix}
\ell_2
\end{matrix}
\\
\noalign{\hrule height 0.5pt}
\begin{matrix}
b_{31} & c b_{31}
\end{matrix}
&
\begin{matrix}
b_{32} & c b_{32}
\end{matrix}
&
\begin{matrix}
0 & 1
\end{matrix}
&
\begin{matrix}
b_{34} & c b_{34}
\end{matrix}
&
\begin{matrix}
\cdots
\end{matrix}
&
\begin{matrix}
b_{3,n-1} & c b_{3,n-1}
\end{matrix}
&
\begin{matrix}
\ell_3
\end{matrix}
\\
\vdots & \vdots & \vdots & \vdots & \ddots & \vdots & \vdots\\
\begin{matrix}
b_{n-1,1} & c b_{n-1,1}
\end{matrix}
&
\begin{matrix}
b_{n-1,2} & c b_{n-1,2}
\end{matrix}
&
\begin{matrix}
b_{n-1,3} & c b_{n-1,3}
\end{matrix}
&
\begin{matrix}
b_{n-1,4} & c b_{n-1,4}
\end{matrix}
&
\begin{matrix}
\cdots
\end{matrix}
&
\begin{matrix}
0 & 1
\end{matrix}
&
\begin{matrix}
\ell_{n-1}
\end{matrix}
\\
\noalign{\hrule height 0.8pt}
\begin{matrix}
a_1 & c a_1
\end{matrix}
&
\begin{matrix}
a_2 & c a_2
\end{matrix}
&
\begin{matrix}
a_3 & c a_3
\end{matrix}
&
\begin{matrix}
a_4 & c a_4
\end{matrix}
&
\begin{matrix}
\cdots
\end{matrix}
&
\begin{matrix}
a_{n-1} & c a_{n-1}
\end{matrix}
&
\begin{matrix}
1 & c
\end{matrix}
\end{array}
\right],
\endgroup
$}
\]
where the final block column is still written in the shorthand \(\ell_i=(u_i,v_i)\). Assume that
\begin{equation}
\label{eq:split-boxed-norm}
u_i^2+v_i^2=-1
\qquad (1\le i\le n-1),
\end{equation}
\begin{equation}
\label{eq:split-boxed-last-row}
u_i+c v_i=-c a_i
\qquad (1\le i\le n-1),
\end{equation}
and
\begin{equation}
\label{eq:split-boxed-offdiag}
c\,(b_{ij}+b_{ji})+\ell_i\cdot \ell_j=0
\qquad (1\le i<j\le n-1),
\end{equation}
where \(\ell_i\cdot \ell_j=u_i u_j+v_i v_j\) is the Euclidean dot product on the distinguished \(2\)-plane. Then the rows of \(M\) are pairwise orthogonal and linearly independent. Therefore \(M\) generates a self-dual code of length \(2n\).
\end{theorem}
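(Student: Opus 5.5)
The plan is a direct Gram-matrix computation in the \(1\times 2\) block form \(\widetilde M\), followed by an independence argument via block functionals. The dimension criterion after Definition~\ref{def:lagrangian} then finishes: a totally isotropic subspace of dimension \(n\) in \(K^{2n}\) is self-dual.

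Write \(R_1,\dots,R_{n-1}\) for the non-final rows and \(R_n\) for the last row. The inner product of two rows is the sum over the \(n\) block positions of the Euclidean products of the corresponding \(1\times2\) blocks. Every product will reduce to three elementary identities in the distinguished plane:
\[
(1,c)\cdot(1,c)=1+c^2=0,\qquad (0,1)\cdot(1,c)=c,\qquad (0,1)\cdot(0,1)=1.
\]

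With these, the four cases go as follows.

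For \(R_n\cdot R_n\), every block of \(R_n\) lies on the isotropic line \(K(1,c)\), so the product is \(0\).

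For \(R_i\cdot R_i\) with \(i\le n-1\), the diagonal block contributes \(1\). The off-diagonal blocks contribute \(b_{ij}^2\cdot 0=0\). The last block contributes \(\ell_i\cdot\ell_i=-1\) by \eqref{eq:split-boxed-norm}. The total is \(0\).

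For \(R_i\cdot R_n\), block \(i\) gives \((0,1)\cdot a_i(1,c)=ca_i\). Blocks \(j\ne i\) (with \(j\le n-1\)) give \(0\), since both entries lie on \(K(1,c)\). The last block gives \(\ell_i\cdot(1,c)=u_i+cv_i=-ca_i\) by \eqref{eq:split-boxed-last-row}. The sum is \(0\).

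For \(R_i\cdot R_j\) with \(i\ne j\), both \(\le n-1\), block \(i\) gives \((0,1)\cdot b_{ji}(1,c)=cb_{ji}\). Block \(j\) gives \(b_{ij}(1,c)\cdot(0,1)=cb_{ij}\). Every other block \(k\le n-1\) gives \(b_{ik}b_{jk}\cdot 0=0\). The last block gives \(\ell_i\cdot\ell_j\). The total \(c(b_{ij}+b_{ji})+\ell_i\cdot\ell_j\) vanishes by \eqref{eq:split-boxed-offdiag}, which by symmetry covers both orders of \(i,j\).

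For linear independence, the key device is the linear functional \(\varphi(s,t)=t-cs\) on \(K^2\). It kills the isotropic line, since \(\varphi(1,c)=0\), and satisfies \(\varphi(0,1)=1\). For \(k\le n-1\), let \(\varphi_k\) apply \(\varphi\) to block \(k\) of a vector in \(K^{2n}\). Then \(\varphi_k(R_i)=\delta_{ik}\) for \(i\le n-1\), and \(\varphi_k(R_n)=0\). If \(\sum_{i<n}\lambda_iR_i+\mu R_n=0\), applying \(\varphi_k\) gives \(\lambda_k=0\) for every \(k\le n-1\). Then \(\mu R_n=0\), and since \(R_n\) has last block \((1,c)\neq 0\), we get \(\mu=0\).

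Hence the row space has dimension \(n=\tfrac12\cdot 2n\) and is totally isotropic, so it is Lagrangian, i.e.\ self-dual. The only step requiring care is the off-diagonal computation: one must track that block \(i\) of \(R_j\) carries the coefficient \(b_{ji}\), not \(b_{ij}\), so that the symmetrized sum \(b_{ij}+b_{ji}\) in \eqref{eq:split-boxed-offdiag} appears exactly. Everything else is routine bookkeeping with the three identities above.
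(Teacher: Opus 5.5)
Your proposal is correct and follows essentially the same route as the paper. Both use the same block-by-block Gram computation based on the isotropy of $(1,c)$, and the same functional $(s,t)\mapsto t-cs$ on the pivot blocks to read off the non-final coefficients. Both then use the final block to force the last coefficient to vanish, and conclude with the half-dimension criterion for self-duality.
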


\begin{proof}
Let \(R_1,\ldots,R_{n-1},R_n\) be the displayed rows, with \(R_n\) the
last row.  Since \((1,c)\cdot(1,c)=0\), every block of \(R_n\) is
isotropic, and therefore \(R_n\cdot R_n=0\).  For \(i<n\), all non-final
blocks of \(R_i,\) except its diagonal block, lie on \(K(1,c)\).  Hence
\[
R_i\cdot R_i=1+\ell_i\cdot\ell_i=0
\]
by \eqref{eq:split-boxed-norm}, and
\[
R_i\cdot R_n
=(0,1)\cdot a_i(1,c)+\ell_i\cdot(1,c)
=ca_i+u_i+cv_i=0
\]
by \eqref{eq:split-boxed-last-row}.  If \(i<j<n\), all mixed block
products vanish except those in block columns \(i,j,n\).  Thus
\[
\begin{aligned}
R_i\cdot R_j
&=(0,1)\cdot b_{ji}(1,c)
  +b_{ij}(1,c)\cdot(0,1)+\ell_i\cdot\ell_j\\
&=c(b_{ij}+b_{ji})+\ell_i\cdot\ell_j=0
\end{aligned}
\]
by \eqref{eq:split-boxed-offdiag}.  The rows are pairwise orthogonal.

Suppose
\[
\sum_{i=1}^{n-1}\lambda_iR_i+\lambda_nR_n=0.
\]
Apply the functional \(\beta(x,y)=y-cx\) to the \(j\)-th non-final
block.  Since \(\beta(0,1)=1\) and
\(\beta(t(1,c))=0\), the resulting equation is \(\lambda_j=0\).
This holds for every \(j<n\).  The first coordinate of the final block then
gives \(\lambda_n=0\).  Hence the \(n\) rows are linearly independent.
Their span is an \(n\)-dimensional totally isotropic subspace of \(K^{2n}\),
and therefore equals its orthogonal complement.
\end{proof}

\noindent
The Lean statement below has exactly the same block data and hypotheses as
the displayed theorem.  Its independence proof uses
\texttt{splitBoxedReadout}, which applies \(\beta(x,y)=y-cx\) to the pivot
blocks and then reads the remaining coefficient from the final block.

\begin{lstlisting}[caption={Lean for Theorem~\ref{thm:split-boxed-form}.}]
theorem paper_split_boxed_form_exact
    (c : K) (ell : Fin m -> SplitBlock K)
    (a : Fin m -> K) (b : Fin m -> Fin m -> K)
    (hc : c ^ 2 = (-1 : K))
    (hnorm : ∀ i, dot (ell i) (ell i) = (-1 : K))
    (hlast : ∀ i, ell i 0 + c * ell i 1 = -c * a i)
    (hoff : ∀ i j, i < j ->
      c * (b i j + b j i) + dot (ell i) (ell j) = 0) :
    SplitBoxedPairwiseOrthogonal
        (splitBoxedRows c ell a b) ∧
      LinearIndependent K (splitBoxedRows c ell a b) ∧
      splitBoxedRowSpace (splitBoxedRows c ell a b) =
        splitBlockRowBilin.orthogonal
          (splitBoxedRowSpace (splitBoxedRows c ell a b))
\end{lstlisting}

\medskip

Theorem~\ref{thm:split-boxed-form} is the forward boxed statement. The remaining question is how much of this boxed structure is forced by self-duality in the reverse direction once a self-dual code is placed in a suitable adapted form.

\begin{theorem}[Conditional boxed normalization from a distinguished isotropic-line row]
\label{thm:conditional-split-boxed-normalization}
Let \(\mathcal C\subset K^{2n}\) be a self-dual code, and fix \(c\in K^\times\) with \(c^2=-1\). Assume that, after coordinate permutations preserving the \(2\)-block decomposition and elementary row operations, \(\mathcal C\) admits a generator matrix whose last row is
\[
\bigl(a_1(1,c),\,a_2(1,c),\,\dots,\,a_{n-1}(1,c),\,(1,c)\bigr),
\]
and whose remaining rows \(R_1,\dots,R_{n-1}\) satisfy the following pivot-adapted conditions:
\begin{enumerate}
\item the \(i\)-th block of \(R_i\) is \(01\);
\item the last block of \(R_i\) is \(\ell_i=(u_i,v_i)\);
\item every other non-final block of \(R_i\) lies on the isotropic line
\(\mathcal I_c=K(1,c)\), so for each \(j\neq i,n\) one has
\[
B_{ij}=b_{ij}(1,c)
\]
for some scalar \(b_{ij}\in K\).
\end{enumerate}
Then this generator matrix is automatically a split boxed matrix of Theorem~\ref{thm:split-boxed-form}: the self-duality of \(\mathcal C\) forces
\[
u_i^2+v_i^2=-1,
\qquad
u_i+c v_i=-c a_i,
\qquad
c(b_{ij}+b_{ji})+\ell_i\cdot \ell_j=0,
\]
and hence \(\mathcal C\) is generated by a matrix of the displayed split boxed type.
\end{theorem}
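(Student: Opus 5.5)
The plan is to run the Gram computation from the proof of Theorem~\ref{thm:split-boxed-form} in reverse. First, coordinate permutations that preserve the \(2\)-block decomposition and elementary row operations do not change the code. Permutations are Euclidean isometries, so the code spanned by the given generator matrix is still self-dual. In particular its rows \(R_1,\dots,R_{n-1},R_n\) are pairwise orthogonal and self-orthogonal, since they lie in \(\mathcal C\subseteq\mathcal C^\perp\). We only use the inclusion \(\mathcal C\subseteq\mathcal C^\perp\) to extract the three relations. Linear independence of the rows is automatic: they generate an \(n\)-dimensional code and there are \(n\) of them. It is also recovered a posteriori from Theorem~\ref{thm:split-boxed-form}.

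The key steps, in order, are the following three block-by-block evaluations. Throughout we use \((1,c)\cdot(1,c)=1+c^2=0\) and \((0,1)\cdot t(1,c)=ct\).
\begin{enumerate}[label=(\alph*)]
\item For \(R_i\cdot R_i\) with \(i<n\): every non-final, non-diagonal block lies on \(\mathcal I_c\) and contributes \(b_{ij}^2\cdot 0\). The diagonal block \(01\) contributes \(1\), and the final block contributes \(\ell_i\cdot\ell_i\). Hence \(0=1+u_i^2+v_i^2\), which gives \(u_i^2+v_i^2=-1\).
\item For \(R_i\cdot R_n\): block \(j\neq i,n\) contributes \(b_{ij}a_j\,(1,c)\cdot(1,c)=0\). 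Block \(i\) contributes \((0,1)\cdot a_i(1,c)=ca_i\), and the final block contributes \(u_i+cv_i\). Hence \(u_i+cv_i=-ca_i\).
\item For \(R_i\cdot R_j\) with \(i<j<n\): block columns \(k\notin\{i,j,n\}\) give multiples of \((1,c)\cdot(1,c)=0\). Column \(i\) gives \((0,1)\cdot b_{ji}(1,c)=cb_{ji}\), column \(j\) gives \(b_{ij}(1,c)\cdot(0,1)=cb_{ij}\), and column \(n\) gives \(\ell_i\cdot\ell_j\). Hence \(c(b_{ij}+b_{ji})+\ell_i\cdot\ell_j=0\).
\end{enumerate}
The orthogonality \(R_n\cdot R_n=0\) holds automatically and imposes nothing.

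With the hypotheses (1)–(3) and the prescribed last row, the generator matrix has exactly the block shape \(\widetilde M\) of Theorem~\ref{thm:split-boxed-form}, with data \(\ell_i,a_i,b_{ij}\). The three derived identities are exactly \eqref{eq:split-boxed-norm}, \eqref{eq:split-boxed-last-row} and \eqref{eq:split-boxed-offdiag}. So the matrix is a split boxed matrix, and Theorem~\ref{thm:split-boxed-form} applies to it; in particular it confirms independence and self-duality of its span, which is \(\mathcal C\) up to the block permutation.

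The only real obstacle is bookkeeping in step (c). One must make sure that the cross terms in columns \(i\) and \(j\) come out with the right indices: \(R_j\) has block \(b_{ji}(1,c)\) in column \(i\), and \(R_i\) has block \(b_{ij}(1,c)\) in column \(j\). One must also check that no other column contributes, including that the pivot \(01\) of \(R_i\) meets only an isotropic block of \(R_j\). A minor point also needs stating explicitly: the relations hold for the permuted code, and self-duality is preserved by block-preserving permutations, as in the remark following Definition~\ref{def:perm-equiv}.
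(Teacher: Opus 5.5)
Your proposal is correct and follows the paper's proof essentially verbatim. It uses the same three Gram evaluations, $R_i\cdot R_i=1+u_i^2+v_i^2$, $R_i\cdot R_n=ca_i+u_i+cv_i$ and $R_i\cdot R_j=c(b_{ij}+b_{ji})+\ell_i\cdot\ell_j$, with isotropy of $(1,c)$ killing every other block product. Your opening claim that block-preserving coordinate permutations ``do not change the code'' is imprecise, since they replace it by a permutation-equivalent code, but your next sentence supplies the right justification: permutations are Euclidean isometries, so self-duality is preserved.
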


\begin{proof}
Write the adapted generator matrix in block form as
\[
\widetilde M=(B_{ij})_{1\le i,j\le n},
\]
with \(B_{ii}=01\) for \(1\le i\le n-1\), \(B_{in}=\ell_i\), \(B_{ni}=a_i(1,c)\), and \(B_{ij}=b_{ij}(1,c)\) for \(i\neq j<n\). Since \(\mathcal C\) is self-dual, its generator rows are pairwise orthogonal.

Applying self-orthogonality to a non-final row \(R_i\) gives
\[
0=\langle R_i,R_i\rangle
=
1+(u_i^2+v_i^2),
\]
because the diagonal block contributes \(1\), the final block contributes \(u_i^2+v_i^2\), and every other block is isotropic. Hence
\[
u_i^2+v_i^2=-1.
\]

Next, orthogonality of \(R_i\) with the distinguished last row gives
\[
0=\langle R_i,R_n\rangle
=
(0,1)\cdot a_i(1,c)+\ell_i\cdot(1,c)
=
c a_i+(u_i+c v_i),
\]
so
\[
u_i+c v_i=-c a_i.
\]

Finally, for \(1\le i<j\le n-1\), orthogonality of \(R_i\) and \(R_j\)
gives
\[
\begin{aligned}
0=\langle R_i,R_j\rangle
&=(0,1)\cdot b_{ji}(1,c)
  +b_{ij}(1,c)\cdot(0,1)+\ell_i\cdot\ell_j\\
&=c(b_{ij}+b_{ji})+\ell_i\cdot\ell_j,
\end{aligned}
\]
since every other block product is a scalar multiple of
\((1,c)\cdot(1,c)=0\).  These are precisely the relations in
Theorem~\ref{thm:split-boxed-form}.
\end{proof}

\noindent
In Lean, the reverse statement is packaged as an existence theorem: from a distinguished isotropic-line row and pairwise orthogonality, one obtains a split-isometric code equivalence to the boxed building-up form together with an orthogonal tail family.

\begin{lstlisting}[caption={Lean for Theorem~\ref{thm:conditional-split-boxed-normalization}.}]
theorem paper_conditional_split_boxed_normalization_core
    (i₀ : Fin (m + 1))
    (g : Fin n → K) (ha : a ≠ 0)
    (hc : c ^ 2 = (-1 : K))
    (hrow : R i₀ = prepend2 a (c * a) g)
    (hall : ∀ j : Fin (m + 1), HasIsotropicHead (K := K) c (R j))
    (horth : PairwiseOrthogonal (K := K) R) :
    ∃ x : Fin n → K, ∃ a' : Fin m → K, ∃ G : Fin m → Fin n → K,
      SplitIsometryCodeEquiv (K := K) c R
        (buildRows x c (pivotResidualTails x a' G)) ∧
      IsOrthogonalTailFamily (K := K) x (pivotResidualTails x a' G)
\end{lstlisting}

%% \begin{remark}
%% \rm
%% Theorem~\ref{thm:conditional-split-boxed-normalization} isolates the remaining gap in a full \(q\)-ary Chinburg--Zhang converse. The boxed equations themselves are no longer the obstruction: once a self-dual code admits a distinguished generator row on the \(c\)-isotropic line together with a pivot-adapted basis for the remaining rows, orthogonality forces the entire split boxed pattern automatically. Thus the universality problem is reduced to an adapted-basis existence problem rather than to any further local algebra in the head plane.
%% \end{remark}

\begin{remark}
\rm
Theorem~\ref{thm:split-boxed-form} and
Theorem~\ref{thm:conditional-split-boxed-normalization} are converse only
within the stated adapted presentation.  For a fixed pairing of the scalar
coordinates, the intersection with \(\bigoplus_jK(1,c)\) may have dimension
greater than one; Theorem~\ref{thm:qary-universal-rank-r} gives the exact
rank-\(r\) form for that pairing.
\end{remark}

\subsection{Universal normal form}\label{sec:universal-normal-form}

The preceding adapted results do not apply directly to an arbitrary
self-dual code.  After fixing an ordered pairing of the \(2n\) scalar
coordinates, however, the defect map determines an intrinsic terminal rank
for that pairing and yields the following normal form.

\begin{theorem}[Universal rank-\(r\) boxed normal form over
\texorpdfstring{$\F_q$}{Fq}]
\label{thm:qary-universal-rank-r}
Let \(K=\F_q\), where \(q\equiv1\pmod4\), fix \(c\in K^\times\) with
\(c^2=-1\), and fix an ordered pairing \(K^{2n}=(K^2)^n\).  For every
Euclidean self-dual code \(\mathcal C\subseteq K^{2n}\), put
\[
 U_c=\bigoplus_{j=1}^n K(1,c),\qquad
 r=\dim(\mathcal C\cap U_c),\qquad k=n-r.
\]
After a permutation of the \(n\) two-coordinate blocks and elementary row
operations, \(\mathcal C\) has a generator matrix
\begin{equation}
G_r(c;Q,\ell,D)=
\left[
\begin{array}{c|c}
\bigl(Q_{ij}(1,c)+\delta_{ij}(0,1)\bigr)_{k\times k}
  & \bigl(\ell_{it}\bigr)_{k\times r}\\ \hline
\bigl(-(D\partial_c\ell_j)_s(1,c)\bigr)_{r\times k}
  & \bigl(D_{st}(1,c)\bigr)_{r\times r}
\end{array}
\right],
\label{eq:universal-rank-r-box}
\end{equation}
where \(1\le i,j\le k\), \(1\le s,t\le r\),
\(\ell_{it}=(u_{it},v_{it})\in K^2\),
\[
 (\partial_c\ell_i)_t=v_{it}-cu_{it},\qquad
 Q_{ii}=\frac c2\bigl(1+\ell_i\cdot\ell_i\bigr),
\]
\(D\in M_r(K)\) is nonsingular, and the remaining coefficients satisfy
\begin{equation}
 c(Q_{ij}+Q_{ji})+\ell_i\cdot\ell_j=0
\qquad(i\ne j).
\label{eq:universal-rank-r-relation}
\end{equation}
The last \(r\) rows span \(\mathcal C\cap U_c\).  Conversely, every matrix
\eqref{eq:universal-rank-r-box} with \(\det D\ne0\) and
\eqref{eq:universal-rank-r-relation} has \(k+r\) independent, pairwise
orthogonal rows and generates a Euclidean self-dual code.  Moreover,
retaining any subset of the first \(k\) rows together with the corresponding
block columns and all terminal rows and columns gives a self-dual code of
the same form with the same \(D\).
\end{theorem}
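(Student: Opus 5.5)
The plan is to organize everything around the \emph{defect map}
\[
\partial_c:K^{2n}=(K^2)^n\longrightarrow K^n,\qquad
(\partial_c w)_j=w_{j,2}-c\,w_{j,1},
\]
that is, the functional \(\beta(x,y)=y-cx\) of the proof of Theorem~\ref{thm:split-boxed-form} applied blockwise. Its kernel is exactly \(U_c\). Restricting to \(\mathcal C\) therefore gives \(\ker(\partial_c|_{\mathcal C})=\mathcal C\cap U_c\) of dimension \(r\), and an image \(\partial_c(\mathcal C)\subseteq K^n\) of dimension \(k=n-r\).

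\textbf{Forward direction: choosing the pivot rows.} A \(k\)-dimensional subspace of \(K^n\) has an information set of size \(k\). After permuting the \(n\) blocks, I may assume that projecting \(\partial_c(\mathcal C)\) to its first \(k\) coordinates is an isomorphism. I then choose \(R_1,\dots,R_k\in\mathcal C\) whose defects restrict to the standard basis \(e_1,\dots,e_k\) on these coordinates; this uses elementary row operations only. Any block \((x,y)\) with \(\beta(x,y)=\delta\) decomposes as \(x(1,c)+\delta(0,1)\). Hence the \(j\)-th block of \(R_i\) for \(j\le k\) has the form \(Q_{ij}(1,c)+\delta_{ij}(0,1)\), and the terminal blocks \(\ell_{it}\) are arbitrary.

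\textbf{Forward direction: the terminal rows.} Next I take any basis of \(\mathcal C\cap U_c\) as the last \(r\) rows. Each of their blocks is a multiple of \((1,c)\), say \(\alpha_{sj}(1,c)\) for \(j\le k\) and \(D_{st}(1,c)\) in the terminal block columns.

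\textbf{Forward direction: reading off the relations.} All the stated relations will come from pairwise orthogonality, using that products of two multiples of \((1,c)\) vanish.
\begin{enumerate}[label=(\alph*)]
\item \(R_i\cdot R_i=(1+2cQ_{ii})+\ell_i\cdot\ell_i=0\). Since \(c^{-1}=-c\), this gives \(Q_{ii}=\frac c2(1+\ell_i\cdot\ell_i)\).
\item \(R_i\cdot R_j=0\) for \(i\ne j\) is exactly \eqref{eq:universal-rank-r-relation}.
\item Pairing terminal row \(s\) with \(R_j\) gives \(c\alpha_{sj}+\sum_tD_{st}(u_{jt}+cv_{jt})=0\). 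Because \(u+cv=c(v-cu)\), this says \(\alpha_{sj}=-(D\partial_c\ell_j)_s\).
\item For nonsingularity of \(D\): suppose a combination \(w\) of terminal rows has zero terminal part. Then \(w=(\alpha_j(1,c))_{j\le k}\), and \(0=w\cdot R_i=c\alpha_i\) forces \(w=0\). Thus \(D\) is nonsingular, and its rows are independent because they form a basis of \(\mathcal C\cap U_c\).
\item The \(n\) rows are independent by the \(\beta\)-readout followed by \(\det D\ne0\). So they generate \(\mathcal C\), and the last \(r\) span \(\mathcal C\cap U_c\) by construction.
\end{enumerate}

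\textbf{Converse.} I would rerun the same Gram computations in reverse, as in the proof of Theorem~\ref{thm:split-boxed-form}. First, the three families of relations give pairwise orthogonality. Second, applying \(\beta\) to the first \(k\) blocks kills the coefficients of the first \(k\) rows. Third, \(\det D\ne0\) kills the remaining coefficients. The result is \(n\) independent isotropic rows, hence a Lagrangian, i.e.\ self-dual, code.

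\textbf{Heredity.} For a retained subset \(S\) of the first \(k\) rows, every relation in \eqref{eq:universal-rank-r-relation} and every diagonal and bottom-left formula involves only indices in \(S\) and terminal indices. The only contributions of deleted block columns to inner products of retained rows are products of two multiples of \((1,c)\), which are zero. So the truncated matrix again has the form \eqref{eq:universal-rank-r-box} with the same \(D\), and the converse applies to it.

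I expect the main obstacle to be the forward normalization: choosing pivot blocks via an information set of \(\partial_c(\mathcal C)\), and proving that \(D\) is nonsingular. It is the orthogonality argument in step (d) that makes the terminal block columns carry \(\mathcal C\cap U_c\) faithfully. Everything else is bookkeeping with \((1,c)\cdot(1,c)=0\).
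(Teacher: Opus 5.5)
Your proposal is correct and follows essentially the paper's proof: the blockwise defect map with an information set for \(\partial_c(\mathcal C)\), pivot rows with standard-basis defects completed by a basis of \(\mathcal C\cap U_c\), the Gram identities coming from \((1,c)\cdot(1,c)=0\), and the \(\beta\)-readout together with \(\det D\neq0\) for independence. Your direct pairing argument for \(\det D\neq0\) is the same computation the paper routes through the derived relation \(A=-D\partial_c\ell\). The paper's converse also checks that, for any matrix of the form \eqref{eq:universal-rank-r-box}, the terminal rows span exactly the intersection of its row space with \(U_c\); your \(\beta\)-readout gives this at once, and it confirms that the truncated matrices in the heredity statement keep the intrinsic rank \(r\).
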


\begin{proof}
For \(v=(v_1,\ldots,v_n)\in(K^2)^n\), write each block uniquely as
\[
 v_j=(\alpha_j(v),c\alpha_j(v)+\beta_j(v)),\qquad
 \beta_j(v)=v_{j,2}-cv_{j,1},
\]
and define \(\partial:\mathcal C\to K^n\) by
\(\partial(v)=(\beta_1(v),\ldots,\beta_n(v))\).  Since
\(\ker\partial=\mathcal C\cap U_c\), rank--nullity gives
\(\rank\partial=k\).  Choose \(k\) coordinates on which
\(\operatorname{im}\partial\) projects isomorphically, permute them to the
first \(k\) positions, and choose rows \(g_1,\ldots,g_k\in\mathcal C\) whose defects
restrict there to the standard basis.  Complete them by a basis
\(S_1,\ldots,S_r\) of \(\ker\partial\).  Applying \(\partial\) to a linear
relation first determines the coefficients of the \(g_i\); independence of
the \(S_s\) then shows that these \(n\) rows form a basis of \(\mathcal C\).

Let \(Q_{ij}\) be the first coordinate of the \(j\)-th pivot block of
\(g_i\), and write its terminal block-row as \(\ell_i\).  Thus its
\(j\)-th pivot block is
\(Q_{ij}(1,c)+\delta_{ij}(0,1)\).  Since every \(S_s\) lies in
\(\ker\partial\), write its pivot and terminal blocks as
\(A_{sj}(1,c)\) and \(D_{st}(1,c)\), respectively.  The identity
\[
 (\alpha,c\alpha+\beta)\cdot(\gamma,c\gamma+\delta)
   =c(\alpha\delta+\beta\gamma)+\beta\delta
\]
and the orthogonality of the chosen basis give
\[
 A_{si}+(D\partial_c\ell_i)_s=0,
\qquad
 \delta_{ij}+c(Q_{ij}+Q_{ji})+\ell_i\cdot\ell_j=0.
\]
The first equation gives the lower-left block in
\eqref{eq:universal-rank-r-box}; the second gives
\eqref{eq:universal-rank-r-relation}, and its diagonal part gives the stated
formula for \(Q_{ii}\).  It remains to prove that \(D\) is nonsingular.
Let \(\lambda=(\lambda_1,\ldots,\lambda_r)\in K^r\) satisfy
\(\lambda D=0\).  For every \(i\), the first identity yields
\[
 (\lambda A)_i
   =\sum_{s=1}^r\lambda_sA_{si}
   =-\sum_{t=1}^r(\partial_c\ell_i)_t
                    \sum_{s=1}^r\lambda_sD_{st}=0.
\]
Consequently, every pivot and terminal block of
\(\sum_s\lambda_sS_s\) is zero.  Since the \(S_s\) are linearly
independent, \(\lambda=0\).  Thus \(\ker D=0\), and hence
\(\det D\ne0\).

Conversely, suppose that \(D\) is nonsingular and that
\eqref{eq:universal-rank-r-relation} holds.  Denote the first \(k\) rows of
\eqref{eq:universal-rank-r-box} by \(p_1,\ldots,p_k\), put
\(q_{it}=(\partial_c\ell_i)_t\), and retain the notation
\(S_1,\ldots,S_r\) for the terminal rows.  Since
\((1,c)\cdot(1,c)=0\), one has
\[
 S_s\cdot S_u=0,
 \qquad
 p_i\cdot S_s
   =c\left(A_{si}+\sum_{t=1}^r q_{it}D_{st}\right)=0.
\]
Moreover,
\[
 p_i\cdot p_j=
 \begin{cases}
  1+2cQ_{ii}+\ell_i\cdot\ell_i=0,&i=j,\\
  c(Q_{ij}+Q_{ji})+\ell_i\cdot\ell_j=0,&i\ne j.
 \end{cases}
\]
Thus the displayed rows are pairwise orthogonal.

Suppose now that
\[
 \sum_{i=1}^k\xi_i p_i+\sum_{s=1}^r\mu_sS_s=0.
\]
Applying \((x,y)\mapsto y-cx\) to the \(j\)-th pivot block gives
\(\xi_j=0\), because the defect of that block is \(\delta_{ij}\) in
\(p_i\) and zero in every \(S_s\).  Hence \(\xi_1=\cdots=\xi_k=0\).
The first coordinates of the terminal blocks then give
\[
 0=\sum_{s=1}^r\mu_sD_{st}=(\mu D)_t
 \qquad(1\le t\le r).
\]
Since \(D\) is nonsingular, \(\mu=0\).  The \(k+r\) rows are therefore
linearly independent and pairwise orthogonal in \(K^{2(k+r)}\); their row
space is consequently self-dual.  If
\(x=\sum_i\xi_ip_i+\sum_s\mu_sS_s\) belongs to \(U_c\), then the defect of
its \(j\)-th pivot block is \(\xi_j\), so \(\xi_j=0\) for every \(j\).
Conversely, every linear combination of the \(S_s\) lies in \(U_c\).
Therefore, writing
\(\mathcal C_r:=\langle G_r(c;Q,\ell,D)\rangle_K\),
\[
 \mathcal C_r\cap U_c
   =\langle S_1,\ldots,S_r\rangle_K.
\]

Finally, let \(I\subseteq\{1,\ldots,k\}\).  Retaining the rows \(p_i\)
with \(i\in I\), their pivot block columns, and all terminal rows and block
columns replaces \(Q\) by \(Q|_{I\times I}\) and \((\ell_i)_{i=1}^k\) by
\((\ell_i)_{i\in I}\), while leaving \(D\) unchanged.  The two Gram
identities above restrict to indices in \(I\).  Hence its rows are pairwise
orthogonal; the pivot-block defects determine the coefficients of the
retained \(p_i\), and the nonsingularity of \(D\) determines those of the
\(S_s\).  Thus the restricted matrix has \(|I|+r\) independent rows in
\(K^{2(|I|+r)}\) and generates a self-dual code.  For
\(I=\varnothing\), this row space is
\[
 \{(z_1(1,c),\ldots,z_r(1,c)):z\in K^r\},
\]
the full terminal isotropic space in \(K^{2r}\), because \(D\) is
nonsingular.  This proves the final assertion.
\end{proof}

\begin{example}[A fixed rank-two pairing over \(\F_{5}\)]
\label{ex:rank-two-gf5}
\rm
Take \(c=2\), so \(c^2=-1\), and let \(\mathcal C\) be generated by
\[
G_0=\begin{pmatrix}
1&3&3&2&1&1&1&3\\
3&4&2&1&4&4&3&2\\
1&4&3&4&2&3&4&2\\
1&2&2&2&3&4&4&4
\end{pmatrix}.
\]
All calculations in this example are in \(\F_{5}\). One checks that
\(G_0G_0^{\mathsf T}=0\), while the minor formed by the first four columns
has determinant \(4\). Hence \(\mathcal C\) has dimension \(4\) and is self-dual.

Pair adjacent coordinates and let \(G_0^\sigma\) be obtained by ordering the
four two-coordinate blocks as \((2,4,1,3)\). Writing
\(a_i=Q_{ii}\), the diagonal formula in
\eqref{eq:universal-rank-r-box} gives \(a_1=4\) and \(a_2=3\). We
substitute these values and \(c=2\), but retain the displayed diagonal
formula in the second coordinate: the two blocks are
\((4,2\cdot4+1)\) and \((3,2\cdot3+1)\). The invertible row transformation
\[
T=\begin{pmatrix}
4&1&0&0\\
2&1&2&0\\
1&3&0&1\\
4&0&3&4
\end{pmatrix},\qquad \det T=3,
\]
gives
\[
G_{2,2}=TG_0^\sigma=
{\arrayrulewidth=1.2pt
\left[\begin{array}{c!{\vrule width 1.2pt}c!{\vrule width 0.6pt}cc}
(4,2\cdot4+1)&\multicolumn{1}{c!{\vrule width 0.6pt}}{2(1,2)}&(2,1)&(3,3)\\[-1pt]
\noalign{\hrule height 1.2pt}
4(1,2)&(3,2\cdot3+1)&(2,3)&(0,2)\\[-1pt]
\noalign{\hrule height 0.6pt}
(1,2)&4(1,2)&(1,2)&(1,2)\\
4(1,2)&2(1,2)&(1,2)&2(1,2)
\end{array}\right]}.
\]
This is \eqref{eq:universal-rank-r-box} with
\[
\ell_1=((2,1),(3,3)),\qquad
\ell_2=((2,3),(0,2)),\qquad
Q_{12}=2,\quad Q_{21}=4,
\]
and
\[
D=\begin{pmatrix}1&1\\1&2\end{pmatrix},\qquad \det D=1.
\]
Modulo five the diagonal pairs are \((4,4)\) and \((3,2)\), respectively.
For a block \((u,v)\), recall that \(\partial_2(u,v)=v-2u\).  Hence
\[
\begin{aligned}
\partial_2\ell_1
  &=\begin{pmatrix}1-2\cdot2\\3-2\cdot3\end{pmatrix}
    =\begin{pmatrix}2\\2\end{pmatrix},\\
-D\partial_2\ell_1
  &=-\begin{pmatrix}1&1\\1&2\end{pmatrix}
       \begin{pmatrix}2\\2\end{pmatrix}
    =-\begin{pmatrix}4\\1\end{pmatrix}
    =\begin{pmatrix}1\\4\end{pmatrix}.
\end{aligned}
\]
Similarly, \(\partial_2\ell_2=(4,2)^{\mathsf T}\) and
\(-D\partial_2\ell_2=(4,2)^{\mathsf T}\).
The off-diagonal relation
\eqref{eq:universal-rank-r-relation} is satisfied. Applying the blockwise
defect map gives
\[
\partial G_{2,2}=
\begin{pmatrix}
1&0&2&2\\
0&1&4&2\\
0&0&0&0\\
0&0&0&0
\end{pmatrix},
\]
so
\[
r=\dim(\mathcal C\cap U_2)=4-\rank(\partial G_{2,2})=2.
\]
The block permutation preserves \(U_2\); consequently, whole-block
permutations and row operations cannot reduce \(r\) for this fixed oriented
pairing.

Deleting pivot rows together with their corresponding block columns gives
\[
G_{1,2}=
{\arrayrulewidth=1.2pt
\left[\begin{array}{c!{\vrule width 1.2pt}cc}
(3,2\cdot3+1)&(2,3)&(0,2)\\[-1pt]
\noalign{\hrule height 1.2pt}
4(1,2)&(1,2)&(1,2)\\
2(1,2)&(1,2)&2(1,2)
\end{array}\right]},\qquad
G_{0,2}=
\begin{pmatrix}
(1,2)&(1,2)\\
(1,2)&2(1,2)
\end{pmatrix}.
\]
Thus Theorem~\ref{thm:qary-universal-rank-r} gives the exact self-dual chain
\[
[8,4]\longrightarrow[6,3]\longrightarrow[4,2],
\]
with the same terminal matrix \(D\) at every level and
\(\langle G_{0,2}\rangle_K=K(1,2)\oplus K(1,2)\).
\end{example}

\noindent
The formal proof separates fixed-pairing normalization from the forward
self-duality calculation:
\begin{lstlisting}[caption={Lean goals for Theorem~\ref{thm:qary-universal-rank-r}.}]
theorem every_qary_selfDualCode_has_rankBoxed_normalForm
    (c : K) (hc : c ^ 2 = (-1 : K))
    (h2 : (2 : K) != 0)
    (hC : QaryBlockSelfDualCode C) :
    HasQaryRankBoxedNormalForm c C

theorem paperRankBoxedRows_forward_selfDual
    (hc : c * c = -1) (h2 : (2 : K) != 0)
    (hD : RankBoxCoreFullRank D)
    (hoff : PaperOffDiagonalRelations c Q ell) :
    RankBoxedPairwiseOrthogonal
      (paperRankBoxedRows c Q ell D) /\
    LinearIndependent K (paperRankBoxedRows c Q ell D) /\
    paperSelfDualCode
      (rankBoxedRowSpace (paperRankBoxedRows c Q ell D))
\end{lstlisting}

\begin{remark}
\rm
The split boxed form of Theorem~\ref{thm:split-boxed-form} is the
case \(r=1\), \(D=(1)\), and \(\ell_i\cdot\ell_i=-1\) of
Theorem~\ref{thm:qary-universal-rank-r}, since then the forced diagonal
coefficient is \(Q_{ii}=\frac c2(1+\ell_i\cdot\ell_i)=0\).
\end{remark}

% Generated by check_gf13_repeated_lineage.py; do not edit.
\newcommand{\GFThirteenRepeatedLineageRows}{%
0 & $C_{18}^{(13)}$ & $[18,9,8]$ & 1,896 & rank-one parent \\
1 & $C_{20}^{(13)}$ & $[20,10,10]$ & $\mathbf{111,024}$ & $\mathcal B_5(G_{18}^{(13)},x_{18}^{(13)})$ \\
}
\newcommand{\GFThirteenRepeatedCatalogueRows}{%
\href{https://github.com/LeGenAI/intersection-coding-theory-cohomology/blob/main/Formalization/Verification/Examples/certificates/gf13-repeated-lineage.json}{$C_{18}^{(13)}$} & 13 & $[18,9,8];\,1,896$ & $(8;\,1,752)$ & \cite{KimChoi2021} \\
\href{https://github.com/LeGenAI/intersection-coding-theory-cohomology/blob/main/Formalization/Verification/Examples/certificates/gf13-repeated-lineage.json}{$C_{20}^{(13)}$} & 13 & $[20,10,10];\,\mathbf{111,024}$ & $(10;\,111,024)$ & \cite{BetsumiyaEtAl} \\
}
\newcommand{\GFThirteenLargestRepeatedMatrix}{\left(\begin{array}{r|cc!{\vrule width 1.2pt}cc|cc|cc|cc|cc|cc|cc|cc!{\vrule width 1.2pt}cc}
\scriptstyle\text{rows} & \multicolumn{2}{c!{\vrule width 1.2pt}}{\scriptstyle\text{new pair}} & \multicolumn{16}{c!{\vrule width 1.2pt}}{\scriptstyle \operatorname{diag}:\ 01;\quad i\ne j:\ b_{ij}(1,c)} & \multicolumn{2}{c}{\scriptstyle D=(1):\ \ell_i;\ (1,c)} \\ \noalign{\hrule height 0.7pt}
\scriptstyle 01\mid x & \cellcolor{black!7}\textcolor{blue!65!black}{0} & \cellcolor{black!7}\textcolor{blue!65!black}{1} & \textcolor{blue!65!black}{2} & \textcolor{blue!65!black}{0} & \textcolor{blue!65!black}{0} & \textcolor{blue!65!black}{11} & \textcolor{blue!65!black}{0} & \textcolor{blue!65!black}{5} & \textcolor{blue!65!black}{0} & \textcolor{blue!65!black}{0} & \textcolor{blue!65!black}{6} & \textcolor{blue!65!black}{7} & \textcolor{blue!65!black}{6} & \textcolor{blue!65!black}{6} & \textcolor{blue!65!black}{0} & \textcolor{blue!65!black}{9} & \textcolor{blue!65!black}{0} & \textcolor{blue!65!black}{0} & \textcolor{blue!65!black}{0} & \textcolor{blue!65!black}{1} \\ \noalign{\hrule height 1.2pt}
\scriptstyle b_{ij},\ell_i & \textcolor{orange!80!black}{2} & \textcolor{orange!80!black}{3} & \cellcolor{black!7}\textcolor{teal!60!black}{0} & \cellcolor{black!7}\textcolor{teal!60!black}{1} & \textcolor{teal!60!black}{10} & \textcolor{teal!60!black}{11} & \textcolor{teal!60!black}{12} & \textcolor{teal!60!black}{8} & \textcolor{teal!60!black}{7} & \textcolor{teal!60!black}{9} & \textcolor{teal!60!black}{9} & \textcolor{teal!60!black}{6} & \textcolor{teal!60!black}{0} & \textcolor{teal!60!black}{0} & \textcolor{teal!60!black}{6} & \textcolor{teal!60!black}{4} & \textcolor{teal!60!black}{7} & \textcolor{teal!60!black}{9} & \textcolor{teal!60!black}{9} & \textcolor{teal!60!black}{3} \\
\scriptstyle b_{ij},\ell_i & \textcolor{orange!80!black}{0} & \textcolor{orange!80!black}{0} & \textcolor{teal!60!black}{11} & \textcolor{teal!60!black}{3} & \cellcolor{black!7}\textcolor{teal!60!black}{0} & \cellcolor{black!7}\textcolor{teal!60!black}{1} & \textcolor{teal!60!black}{9} & \textcolor{teal!60!black}{6} & \textcolor{teal!60!black}{4} & \textcolor{teal!60!black}{7} & \textcolor{teal!60!black}{2} & \textcolor{teal!60!black}{10} & \textcolor{teal!60!black}{2} & \textcolor{teal!60!black}{10} & \textcolor{teal!60!black}{11} & \textcolor{teal!60!black}{3} & \textcolor{teal!60!black}{0} & \textcolor{teal!60!black}{0} & \textcolor{teal!60!black}{9} & \textcolor{teal!60!black}{3} \\
\scriptstyle b_{ij},\ell_i & \textcolor{orange!80!black}{2} & \textcolor{orange!80!black}{3} & \textcolor{teal!60!black}{5} & \textcolor{teal!60!black}{12} & \textcolor{teal!60!black}{8} & \textcolor{teal!60!black}{1} & \cellcolor{black!7}\textcolor{teal!60!black}{0} & \cellcolor{black!7}\textcolor{teal!60!black}{1} & \textcolor{teal!60!black}{9} & \textcolor{teal!60!black}{6} & \textcolor{teal!60!black}{10} & \textcolor{teal!60!black}{11} & \textcolor{teal!60!black}{6} & \textcolor{teal!60!black}{4} & \textcolor{teal!60!black}{10} & \textcolor{teal!60!black}{11} & \textcolor{teal!60!black}{12} & \textcolor{teal!60!black}{8} & \textcolor{teal!60!black}{5} & \textcolor{teal!60!black}{0} \\
\scriptstyle b_{ij},\ell_i & \textcolor{orange!80!black}{6} & \textcolor{orange!80!black}{9} & \textcolor{teal!60!black}{9} & \textcolor{teal!60!black}{6} & \textcolor{teal!60!black}{12} & \textcolor{teal!60!black}{8} & \textcolor{teal!60!black}{7} & \textcolor{teal!60!black}{9} & \cellcolor{black!7}\textcolor{teal!60!black}{0} & \cellcolor{black!7}\textcolor{teal!60!black}{1} & \textcolor{teal!60!black}{6} & \textcolor{teal!60!black}{4} & \textcolor{teal!60!black}{4} & \textcolor{teal!60!black}{7} & \textcolor{teal!60!black}{3} & \textcolor{teal!60!black}{2} & \textcolor{teal!60!black}{0} & \textcolor{teal!60!black}{0} & \textcolor{teal!60!black}{10} & \textcolor{teal!60!black}{4} \\
\scriptstyle b_{ij},\ell_i & \textcolor{orange!80!black}{7} & \textcolor{orange!80!black}{4} & \textcolor{teal!60!black}{1} & \textcolor{teal!60!black}{5} & \textcolor{teal!60!black}{8} & \textcolor{teal!60!black}{1} & \textcolor{teal!60!black}{0} & \textcolor{teal!60!black}{0} & \textcolor{teal!60!black}{12} & \textcolor{teal!60!black}{8} & \cellcolor{black!7}\textcolor{teal!60!black}{0} & \cellcolor{black!7}\textcolor{teal!60!black}{1} & \textcolor{teal!60!black}{1} & \textcolor{teal!60!black}{5} & \textcolor{teal!60!black}{8} & \textcolor{teal!60!black}{1} & \textcolor{teal!60!black}{11} & \textcolor{teal!60!black}{3} & \textcolor{teal!60!black}{3} & \textcolor{teal!60!black}{9} \\
\scriptstyle b_{ij},\ell_i & \textcolor{orange!80!black}{2} & \textcolor{orange!80!black}{3} & \textcolor{teal!60!black}{10} & \textcolor{teal!60!black}{11} & \textcolor{teal!60!black}{8} & \textcolor{teal!60!black}{1} & \textcolor{teal!60!black}{7} & \textcolor{teal!60!black}{9} & \textcolor{teal!60!black}{5} & \textcolor{teal!60!black}{12} & \textcolor{teal!60!black}{3} & \textcolor{teal!60!black}{2} & \cellcolor{black!7}\textcolor{teal!60!black}{0} & \cellcolor{black!7}\textcolor{teal!60!black}{1} & \textcolor{teal!60!black}{10} & \textcolor{teal!60!black}{11} & \textcolor{teal!60!black}{9} & \textcolor{teal!60!black}{6} & \textcolor{teal!60!black}{0} & \textcolor{teal!60!black}{5} \\
\scriptstyle b_{ij},\ell_i & \textcolor{orange!80!black}{12} & \textcolor{orange!80!black}{5} & \textcolor{teal!60!black}{3} & \textcolor{teal!60!black}{2} & \textcolor{teal!60!black}{11} & \textcolor{teal!60!black}{3} & \textcolor{teal!60!black}{8} & \textcolor{teal!60!black}{1} & \textcolor{teal!60!black}{7} & \textcolor{teal!60!black}{9} & \textcolor{teal!60!black}{8} & \textcolor{teal!60!black}{1} & \textcolor{teal!60!black}{3} & \textcolor{teal!60!black}{2} & \cellcolor{black!7}\textcolor{teal!60!black}{0} & \cellcolor{black!7}\textcolor{teal!60!black}{1} & \textcolor{teal!60!black}{6} & \textcolor{teal!60!black}{4} & \textcolor{teal!60!black}{8} & \textcolor{teal!60!black}{0} \\
\scriptstyle b_{ij},\ell_i & \textcolor{orange!80!black}{7} & \textcolor{orange!80!black}{4} & \textcolor{teal!60!black}{2} & \textcolor{teal!60!black}{10} & \textcolor{teal!60!black}{9} & \textcolor{teal!60!black}{6} & \textcolor{teal!60!black}{6} & \textcolor{teal!60!black}{4} & \textcolor{teal!60!black}{10} & \textcolor{teal!60!black}{11} & \textcolor{teal!60!black}{5} & \textcolor{teal!60!black}{12} & \textcolor{teal!60!black}{4} & \textcolor{teal!60!black}{7} & \textcolor{teal!60!black}{2} & \textcolor{teal!60!black}{10} & \cellcolor{black!7}\textcolor{teal!60!black}{0} & \cellcolor{black!7}\textcolor{teal!60!black}{1} & \textcolor{teal!60!black}{8} & \textcolor{teal!60!black}{0} \\ \noalign{\hrule height 1.2pt}
\scriptstyle a_i,(1,c) & \textcolor{orange!80!black}{3} & \textcolor{orange!80!black}{11} & \textcolor{violet!75!black}{3} & \textcolor{violet!75!black}{2} & \textcolor{violet!75!black}{3} & \textcolor{violet!75!black}{2} & \textcolor{violet!75!black}{12} & \textcolor{violet!75!black}{8} & \textcolor{violet!75!black}{7} & \textcolor{violet!75!black}{9} & \textcolor{violet!75!black}{6} & \textcolor{violet!75!black}{4} & \textcolor{violet!75!black}{8} & \textcolor{violet!75!black}{1} & \textcolor{violet!75!black}{1} & \textcolor{violet!75!black}{5} & \textcolor{violet!75!black}{1} & \textcolor{violet!75!black}{5} & \cellcolor{black!7}\textcolor{violet!75!black}{1} & \cellcolor{black!7}\textcolor{violet!75!black}{5} \\
\end{array}\right)}
\newcommand{\GFThirteenLargestParentParameters}{c=5,\; k=8}
\newcommand{\GFThirteenEighteenCorrection}{(2,0,0,11,0,5,0,0,6,7,6,6,0,9,0,0,0,1)}
\newcommand{\GFThirteenRepeatedCertificate}{\href{https://github.com/LeGenAI/intersection-coding-theory-cohomology/blob/main/Formalization/Verification/Examples/certificates/gf13-repeated-lineage.json}{$C_{20}^{(13)}$}}

\section{Applications}\label{sec:applications}

We first retain the explicit small-length applications from the submitted
manuscript: two optimal self-dual codes over \(\F_{5}\), two optimal
self-dual codes over \(\F_{13}\), and a self-dual \([12,6,6]\) code over
\(\F_{13}\).  We then add one repeated \(\F_{13}\) realization in which the
split-boxed parent and its building-up child are visible in the same complete
generator matrix.

\subsection{Optimal self-dual codes over \texorpdfstring{\(\F_{5}\)}{F5}}

\begin{prop}[An optimal self-dual \texorpdfstring{$[6,3,4]$}{[6,3,4]} code over \texorpdfstring{$\F_{5}$}{F5}]
\label{ex:split-boxed-63}
\rm
Let \(K=\F_{5}\) and \(c=2\), so \(c^2=-1\). Choose
\[
\ell_1=(0,2),\qquad \ell_2=(0,2),\qquad b_{12}=1,\qquad b_{21}=2.
\]
Then \eqref{eq:split-boxed-last-row} gives
\[
a_1=a_2=3,
\]
and the corresponding split boxed matrix is
\[
\widetilde M_3=
\begin{pmatrix}
01 & 12 & 02\\
24 & 01 & 02\\
31 & 31 & 12
\end{pmatrix}.
\]
This matrix generates an optimal self-dual \([6,3,4]\) code over \(\F_{5}\).
\end{prop}

\begin{proof}
The boxed relations are satisfied, so Theorem~\ref{thm:split-boxed-form} shows that \(\widetilde M_3\) generates a self-dual \([6,3]\) code. In ordinary coordinates the generator matrix is
\[
G_3=
\left(
\begin{array}{cc|cc|cc}
0&1&1&2&0&2\\
\hline
2&4&0&1&0&2\\
\hline
3&1&3&1&1&2
\end{array}
\right).
\]
Enumeration of the \(5^3\) coefficient vectors gives the weight enumerator
\[
1+60z^4+24z^5+40z^6.
\]
Hence the minimum distance is \(4\).  The Singleton bound gives
\(d\le6-3+1=4\) for every linear \([6,3]\) code, so this code is optimal.

Deleting the first block row and the first block column from \(\widetilde M_3\) leaves the shorter parent package
\[
\widetilde M_2=
\begin{pmatrix}
01 & 02\\
31 & 12
\end{pmatrix},
\]
which is the corresponding self-dual \([4,2,2]\) parent.
\end{proof}

\begin{prop}[A split boxed child \texorpdfstring{$[8,4,4]$}{[8,4,4]} over \texorpdfstring{$\F_{5}$}{F5}]
\label{ex:split-boxed-84}
\rm
Let \(K=\F_{5}\) and \(c=2\). Choose
\[
\ell_1=\ell_2=\ell_3=(0,2),
\]
and
\[
b_{12}=b_{13}=b_{23}=1,\qquad
b_{21}=b_{31}=b_{32}=2.
\]
Then \eqref{eq:split-boxed-last-row} gives
\[
a_1=a_2=a_3=3,
\]
and the corresponding split boxed matrix is
\[
\widetilde M_4=
\begin{pmatrix}
01 & 12 & 12 & 02\\
24 & 01 & 12 & 02\\
24 & 24 & 01 & 02\\
31 & 31 & 31 & 12
\end{pmatrix}.
\]
This matrix generates an optimal self-dual \([8,4,4]\) code over \(\F_{5}\).
\end{prop}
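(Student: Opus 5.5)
I will prove the statement in three steps: self-duality via Theorem~\ref{thm:split-boxed-form}, a weight-enumerator computation for the minimum distance, and a comparison with a known upper bound for optimality.

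\textbf{Step 1: the boxed relations.} With \(c=2\) over \(\F_5\), each \(\ell_i=(0,2)\) satisfies \(u_i^2+v_i^2=0+4=-1\), so \eqref{eq:split-boxed-norm} holds. Equation \eqref{eq:split-boxed-last-row} reads \(0+2\cdot2=4=-2a_i\), so \(a_i=3\), in agreement with the statement. For \eqref{eq:split-boxed-offdiag} with \(i<j\), we have \(b_{ij}+b_{ji}=3\), hence \(c(b_{ij}+b_{ji})=6=1\), and \(\ell_i\cdot\ell_j=4\); the sum is \(5=0\). I will also confirm that the displayed blocks match: \(1\cdot(1,2)=12\), \(2\cdot(1,2)=24\), \(3\cdot(1,2)=(3,6)=31\), and the final row block is \((1,c)=12\). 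Theorem~\ref{thm:split-boxed-form} then gives that \(\widetilde M_4\) generates a self-dual \([8,4]\) code over \(\F_5\).

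\textbf{Step 2: minimum distance.} I will write out the ordinary \(4\times 8\) generator matrix
\[
\begin{pmatrix}
0&1&1&2&1&2&0&2\\
2&4&0&1&1&2&0&2\\
2&4&2&4&0&1&0&2\\
3&1&3&1&3&1&1&2
\end{pmatrix}
\]
and enumerate all \(5^4=625\) coefficient vectors to obtain the weight enumerator. I expect it to show no nonzero words of weight \(\le3\) and some of weight \(4\). Weight \(4\) is attained by the first row, which has zero entries exactly in positions 1 and 7 and weight 6, and by \(2\cdot\)row\(_1\)\(+\)row\(_2\)\(=(2,1,2,0,3,1,0,1)\), which has weight 6. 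A cleaner weight-\(4\) witness is the block difference of rows 1 and 2, which I would pick out from the enumeration. Because the parent structure does not control the minimum distance, I will not try to derive \(d\) by hand; the exact enumeration, as in the \([6,3,4]\) case, is the step that carries real content and is the main obstacle, though it is purely computational.

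\textbf{Step 3: optimality.} The Singleton bound only gives \(d\le5\), which is not enough. I will instead invoke the known bound for Euclidean self-dual codes over \(\F_5\) of length \(8\): the maximal minimum distance is \(4\), i.e.\ no self-dual \([8,4,5]\) MDS code exists over \(\F_5\). This follows from the classification tables in~\cite{self-dual}. Alternatively, it can be argued from the known fact that MDS codes of length \(8\) and dimension \(4\) require \(q\ge7\), since the MDS conjecture bound \(n\le q+1=6\) holds for \(q=5\). This is the cleanest route: an \([8,4,5]\) code over \(\F_5\) would be MDS of length \(8>q+1\), which is impossible for \(k=4\) by the known results for small \(q\). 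Together with Steps 1 and 2, this shows that \(\widetilde M_4\) generates an optimal self-dual \([8,4,4]\) code.
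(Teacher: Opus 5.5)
Your proposal is correct and follows the paper's proof. You check the boxed relations and apply Theorem~\ref{thm:split-boxed-form}, enumerate the \(625\) codewords of \(G_4\) (the paper records \(1+48z^4+32z^5+288z^6+128z^7+128z^8\)), and exclude \(d=5\) because an \([8,4,5]\) code over \(\F_5\) would be MDS of length \(8>q+1=6\). One cleanup: your weight-\(4\) witness sentence contradicts itself, since the first row \(r_1\) and \(2r_1+r_2\) both have weight \(6\); the explicit weight-\(4\) word is \(r_1-r_2=(3,2,1,1,0,0,0,0)\).
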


\begin{proof}
Again the boxed relations hold, so Theorem~\ref{thm:split-boxed-form} gives a self-dual \([8,4]\) code. In ordinary coordinates the generator matrix is

\[
G_4= \left(
\begin{array}{cc|cccc|cc}
0&1&1&2&1&2&0&2\\
2&4&0&1&1&2&0&2\\
2&4&2&4&0&1&0&2\\
\hline
3&1&3&1&3&1&1&2
\end{array}
\right).
\]
Enumeration of the \(5^4\) coefficient vectors gives
\[
1+48z^4+32z^5+288z^6+128z^7+128z^8,
\]
so the minimum distance is \(4\).  If a linear \([8,4]\) code over
\(\F_{5}\) had distance at least \(5\), the Singleton bound would force
distance \(5\), making it a nontrivial MDS code of length \(8>q+1=6\),
which is impossible.  Thus the displayed self-dual code is optimal.
\end{proof}

Moreover, deleting the first block row and the first block column from
\(\widetilde M_4\) recovers exactly \(\widetilde M_3\).  Thus these two
examples constitute one recursive calculation,
\[
[4,2,2]_5\longrightarrow[6,3,4]_5
\longrightarrow[8,4,4]_5,
\]
rather than two unrelated existence examples.

\subsection{Small-length self-dual codes over
\texorpdfstring{\(\F_{13}\)}{F13}}

The split boxed construction over \(\F_{13}\) gives the following three
applications from the submitted manuscript.

\begin{prop}[An optimal self-dual \texorpdfstring{$[8,4,5]$}{[8,4,5]} code over \texorpdfstring{$\F_{13}$}{F13}]
\label{prop:gf13-845}
\rm
Let \(K=\F_{13}\) and \(c=5\), so \(c^2=-1\). Choose
\[
\ell_1=(0,5),\qquad \ell_2=(0,5),\qquad \ell_3=(4,3),
\]
and
\[
b_{12}=11,\quad b_{21}=10,\quad b_{13}=5,\quad b_{31}=5,
\quad b_{23}=11,\quad b_{32}=12.
\]
Then \eqref{eq:split-boxed-last-row} gives
\[
a_1=8,\qquad a_2=8,\qquad a_3=4,
\]
and the corresponding split boxed matrix is
\[
G_8=
\left(
\begin{array}{cc|cccc|cc}
0&1&11&3&5&12&0&5\\
10&11&0&1&11&3&0&5\\
5&12&12&8&0&1&4&3\\
\hline
8&1&8&1&4&7&1&5
\end{array}
\right).
\]
This matrix generates an optimal self-dual \([8,4,5]\) code over
\(\F_{13}\).
\end{prop}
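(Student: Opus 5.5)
The plan is to split the claim into two independent parts: \emph{self-duality}, which comes from Theorem~\ref{thm:split-boxed-form}, and \emph{minimum distance $5$ with optimality}, which comes from a finite check combined with the Singleton bound. All arithmetic below is in $\F_{13}$ with $c=5$, and $c^2=25\equiv-1$.

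\textbf{Step 1: verify the boxed relations.} I would check \eqref{eq:split-boxed-norm}, \eqref{eq:split-boxed-last-row} and \eqref{eq:split-boxed-offdiag} directly.
\begin{itemize}
\item Norm relation: $\ell_1\cdot\ell_1=\ell_2\cdot\ell_2=25\equiv-1$ and $\ell_3\cdot\ell_3=16+9=25\equiv-1$.
\item Last-row relation for $\ell_1,\ell_2$: $u_i+cv_i=25\equiv12$ and $-ca_i=-40\equiv12$.
\item Last-row relation for $\ell_3$: $u_3+cv_3=4+15\equiv6$ and $-5\cdot4=-20\equiv6$.
\item Off-diagonal relation for $(1,2)$: $c(b_{12}+b_{21})=5\cdot21\equiv1$ and $\ell_1\cdot\ell_2\equiv12$, so the sum is $0$.
\item Off-diagonal relation for $(1,3)$: $5\cdot10\equiv11$ and $\ell_1\cdot\ell_3=15\equiv2$, so the sum is $0$.
\item Off-diagonal relation for $(2,3)$: $5\cdot23\equiv11$ and $\ell_2\cdot\ell_3\equiv2$, so the sum is $0$.
\end{itemize}
I would also confirm that the displayed ordinary matrix $G_8$ is exactly the block pattern of Theorem~\ref{thm:split-boxed-form}. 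The off-diagonal blocks must be $b_{ij}(1,5)$, for example $11(1,5)=(11,3)$ and $10(1,5)=(10,11)$. The last row must be $a_i(1,5)$ followed by $(1,5)$. Theorem~\ref{thm:split-boxed-form} then gives four independent, pairwise orthogonal rows, so $\langle G_8\rangle$ is a self-dual $[8,4]$ code.

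\textbf{Step 2: show $d\ge5$.} The Singleton bound gives $d\le 8-4+1=5$ for every linear $[8,4]$ code. So it suffices to show that no nonzero codeword has weight at most $4$, that is, that the code is MDS. I would do this in one of two ways.
\begin{itemize}
\item Exhaustive enumeration: enumerate all $13^4=28561$ coefficient vectors and record the weight enumerator, exactly as in the $\F_5$ examples.
\item Minor check: bring $G_8$ to systematic form $[I_4\mid A]$ by an invertible row transformation. Proposition~\ref{prop:systematic-form} then gives $AA^T=-I_4$, which serves as a consistency check. The code is MDS if and only if every square submatrix of $A$ is nonsingular. Equivalently, all $\binom84=70$ maximal minors of $G_8$ are nonzero in $\F_{13}$.
\end{itemize}
Once $d\ge5$ is established, the Singleton bound forces $d=5$ and shows that no $[8,4]$ code over $\F_{13}$ does better, so the code is optimal. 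Unlike the $\F_5$ case, the MDS bound $n\le q+1=14$ raises no obstruction here.

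\textbf{Main obstacle.} The only nontrivial step is this finite distance verification. It has no structural shortcut from the boxed form and is a pure computation, so I would present the resulting weight enumerator, or the list of nonzero minors, as an exact certificate. I would also cross-check it against the enumeration code used for the other examples.
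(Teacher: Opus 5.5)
Your proposal is correct and is essentially the paper's argument: the boxed relations, which you verify correctly, give self-duality via Theorem~\ref{thm:split-boxed-form}, and exact enumeration of the $13^4$ coefficient vectors gives $d=5$. The differences are minor: the paper exhibits an explicit weight-$5$ word (coefficients $(0,1,3,1)$) and uses the Griesmer bound to exclude an $[8,4,6]$ code, whereas you take both the upper bound and optimality directly from Singleton, which is equally valid here.
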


\begin{proof}
The boxed relations are satisfied, so
Theorem~\ref{thm:split-boxed-form} shows that \(G_8\) generates a
self-dual \([8,4]\) code.  The coefficient vector \((0,1,3,1)\) produces
the codeword
\[
(7,9,5,0,2,0,0,6),
\]
which has weight \(5\).  Exact enumeration of the \(13^4\) coefficient
vectors gives no nonzero word of smaller weight.  Hence the minimum distance
is \(5\).  The Griesmer bound excludes a linear \([8,4,6]\) code, since
\[
\sum_{i=0}^{3}\left\lceil\frac{6}{13^i}\right\rceil
=6+1+1+1=9>8.
\]
Thus \(G_8\) is optimal among all linear \([8,4]\) codes over \(\F_{13}\).
\end{proof}

\begin{prop}[An optimal self-dual \texorpdfstring{$[10,5,6]$}{[10,5,6]} code over \texorpdfstring{$\F_{13}$}{F13}]
\label{prop:gf13-1056}
\rm
Let \(K=\F_{13}\) and \(c=5\). Choose
\[
\ell_1=(9,3),\qquad \ell_2=(10,4),\qquad
\ell_3=(10,4),\qquad \ell_4=(4,10),
\]
and
\[
\begin{aligned}
&b_{12}=5,\ b_{21}=11,\quad b_{13}=11,\ b_{31}=5,\quad
b_{14}=10,\ b_{41}=8,\\
&b_{23}=7,\ b_{32}=1,\quad b_{24}=11,\ b_{42}=12,\quad
b_{34}=1,\ b_{43}=9.
\end{aligned}
\]
Then \eqref{eq:split-boxed-last-row} gives
\[
a_1=3,\qquad a_2=7,\qquad a_3=7,\qquad a_4=10,
\]
and the corresponding split boxed matrix is
\[
G_{10}=
\left(
\begin{array}{cc|cccccc|cc}
0&1&5&12&11&3&10&11&9&3\\
11&3&0&1&7&9&11&3&10&4\\
5&12&1&5&0&1&1&5&10&4\\
8&1&12&8&9&6&0&1&4&10\\
\hline
3&2&7&9&7&9&10&11&1&5
\end{array}
\right).
\]
This matrix generates an optimal self-dual \([10,5,6]\) code over
\(\F_{13}\).
\end{prop}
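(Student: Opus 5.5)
The plan has three stages, mirroring the proofs of Propositions~\ref{ex:split-boxed-63}, \ref{ex:split-boxed-84} and~\ref{prop:gf13-845}: first show that \(G_{10}\) is a split boxed matrix, then compute the minimum distance, then compare it with a bound.

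\textbf{Self-duality.} I will check the hypotheses of Theorem~\ref{thm:split-boxed-form} modulo \(13\) with \(c=5\).
\begin{itemize}
\item \emph{Norm condition}~\eqref{eq:split-boxed-norm}. For \(\ell_1=(9,3)\) we get \(81+9=90\equiv12=-1\). For \(\ell_2=\ell_3=(10,4)\) and \(\ell_4=(4,10)\) we get \(100+16=116\equiv12\).
\item \emph{Last-row condition}~\eqref{eq:split-boxed-last-row}. For example, \(u_1+cv_1=9+15\equiv11\) and \(-ca_1=-15\equiv11\). The values \(a_2=a_3=7\) and \(a_4=10\) are checked the same way.
\item \emph{Off-diagonal relations}~\eqref{eq:split-boxed-offdiag}. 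These are six equations \(5(b_{ij}+b_{ji})+\ell_i\cdot\ell_j\equiv0\). For \((i,j)=(1,2)\): \(5\cdot16=80\equiv2\) and \(\ell_1\cdot\ell_2=90+12=102\equiv11\), which sum to \(13\equiv0\).
\end{itemize}
I will also confirm that the displayed ordinary matrix is the expansion of these blocks, e.g.\ \(b_{12}(1,c)=(5,25)\equiv(5,12)\). Theorem~\ref{thm:split-boxed-form} then gives that the five rows are independent and pairwise orthogonal, so \(\langle G_{10}\rangle\) is a self-dual \([10,5]\) code.

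\textbf{Minimum distance and optimality.} The Singleton bound gives \(d\le10-5+1=6\) for any linear \([10,5]\) code. So it suffices to show \(d\ge6\), that is, that the code is MDS. Once that holds, optimality is immediate from the Singleton bound, with no Griesmer argument needed.

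\textbf{Showing \(d\ge6\).} This is the main obstacle, since it is a genuinely computational step. I would use the MDS criterion: a \([10,5]\) code is MDS exactly when every set of \(5\) columns of \(G_{10}\) is linearly independent. This reduces the check to \(\binom{10}{5}=252\) determinants over \(\F_{13}\). It can be halved because the code is self-dual: a \(5\)-set is an information set iff its complement is, so only complementary pairs need one check each.

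As an independent confirmation, I would run an exact enumeration of the \(13^5\) coefficient vectors, reduced projectively by scalars to \((13^5-1)/12\) vectors. This should report no nonzero word of weight below \(6\). I would also exhibit an explicit weight-\(6\) codeword, as in Proposition~\ref{prop:gf13-845}, to pin down equality \(d=6\). Of course equality already follows from \(d\ge6\) together with the Singleton bound.
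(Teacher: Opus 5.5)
Your proposal is correct. The self-duality step is the same as the paper's: verify the relations of Theorem~\ref{thm:split-boxed-form}. All four norm, four last-row and six off-diagonal relations do hold modulo \(13\).

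Your treatment of distance and optimality is a genuinely different route. The paper exhibits the weight-\(6\) word from the coefficient vector \((0,0,1,1,0)\), relies on an exhaustive enumeration of the \(13^5\) coefficient vectors, and then excludes \([10,5,7]\) by the Griesmer bound. You instead observe that \(d=6\) is exactly the Singleton bound, so the claim is that the code is MDS. Optimality then becomes immediate, and the distance check becomes an information-set check on \(126\) complementary pairs of \(5\)-subsets of columns.

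Nothing is lost by dropping Griesmer. Since \(7\le 13\), every term \(\lceil 7/13^i\rceil\) with \(i\ge1\) equals \(1\), so the paper's sum \(7+1+1+1+1\) is the Singleton bound in disguise.

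The two approaches buy different things. The enumeration yields the full weight distribution and applies uniformly to the non-MDS example \([12,6,6]_{13}\) of Proposition~\ref{prop:gf13-1266}. There Singleton only gives \(d\le7\), so your shortcut is unavailable. Your route is structural and much cheaper, and it can be pushed to a hand check. Row reduction gives \(G_{10}\sim[I_5\mid A]\), where the rows of \(A\) are \((10,2,1,5,8)\), \((5,8,10,2,12)\), \((10,6,11,9,5)\), \((10,9,5,6,2)\), \((8,3,5,3,3)\), and \(AA^{\mathsf T}=-I_5\). MDS is equivalent to every square submatrix of \(A\) being nonsingular. Because \(A^{-1}=-A^{\mathsf T}\), Jacobi's complementary-minor identity pairs each \(k\times k\) minor with the complementary \((5-k)\times(5-k)\) minor. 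It therefore suffices that all \(25\) entries of \(A\) are nonzero and that, for each pair of rows, the five entrywise ratios are distinct. Both conditions hold, which confirms \(d=6\).
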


\begin{proof}
The boxed relations hold, so Theorem~\ref{thm:split-boxed-form} gives a
self-dual \([10,5]\) code.  The coefficient vector \((0,0,1,1,0)\)
produces
\[
(0,0,0,0,9,7,1,6,1,1),
\]
which has weight \(6\).  Exact enumeration of the \(13^5\) coefficient
vectors gives no nonzero word of smaller weight.  Thus the minimum distance
is \(6\).  The Griesmer bound excludes a linear \([10,5,7]\) code, since
\[
\sum_{i=0}^{4}\left\lceil\frac{7}{13^i}\right\rceil
=7+1+1+1+1=11>10.
\]
Therefore \(G_{10}\) is optimal among all linear \([10,5]\) codes over
\(\F_{13}\).
\end{proof}

\begin{prop}[An exact self-dual \texorpdfstring{$[12,6,6]$}{[12,6,6]} code over \texorpdfstring{$\F_{13}$}{F13}]
\label{prop:gf13-1266}
\rm
Let \(K=\F_{13}\) and \(c=5\). Choose
\[
\ell_1=(9,3),\quad \ell_2=(4,10),\quad \ell_3=(3,9),\quad
\ell_4=(10,4),\quad \ell_5=(9,3),
\]
and
\[
\begin{aligned}
&b_{12}=4,\ b_{21}=1,\quad b_{13}=7,\ b_{31}=3,\quad
b_{14}=7,\ b_{41}=9,\quad b_{15}=1,\ b_{51}=7,\\
&b_{23}=8,\ b_{32}=8,\quad b_{24}=0,\ b_{42}=10,\quad
b_{25}=8,\ b_{52}=10,\\
&b_{34}=12,\ b_{43}=6,\quad b_{35}=12,\ b_{53}=11,\quad
b_{45}=3,\ b_{54}=0.
\end{aligned}
\]
Then \eqref{eq:split-boxed-last-row} gives
\[
a_1=3,\qquad a_2=10,\qquad a_3=6,\qquad a_4=7,\qquad a_5=3,
\]
and the corresponding split boxed matrix is
\[
G_{12}=
\left(
\begin{array}{cc|cccccccc|cc}
0&1&4&7&7&9&7&9&1&5&9&3\\
1&5&0&1&8&1&0&0&8&1&4&10\\
3&2&8&1&0&1&12&8&12&8&3&9\\
9&6&10&11&6&4&0&1&3&2&10&4\\
7&9&10&11&11&3&0&0&0&1&9&3\\
\hline
3&2&10&11&6&4&7&9&3&2&1&5
\end{array}
\right).
\]
This matrix generates a self-dual \([12,6,6]\) code over \(\F_{13}\).
\end{prop}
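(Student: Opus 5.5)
The plan follows the pattern of Propositions~\ref{prop:gf13-845} and~\ref{prop:gf13-1056}. There are three steps: check the boxed relations so that Theorem~\ref{thm:split-boxed-form} gives self-duality, exhibit a weight-$6$ codeword, and exhaustively rule out smaller weights.

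\textbf{Step 1: the boxed relations.} Work in $\F_{13}$ with $c=5$, so $c^2=25=-1$.

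First check \eqref{eq:split-boxed-norm} for each $\ell_i$.
\begin{itemize}
\item $(9,3)$: $81+9=90\equiv12=-1$.
\item $(4,10)$: $16+100=116\equiv12$.
\item $(3,9)$ and $(10,4)$: these follow by symmetry with the two pairs above.
\end{itemize}

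Next, the $a_i$ are forced by \eqref{eq:split-boxed-last-row}: $a_i=-c^{-1}(u_i+cv_i)=-(c^{-1}u_i+v_i)$, where $c^{-1}=-c=8$. For $\ell_1=(9,3)$ this gives $-(72+3)=-75\equiv3$, which matches $a_1=3$. The remaining $a_i$ are checked the same way.

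Then verify the ten off-diagonal relations \eqref{eq:split-boxed-offdiag}, $5(b_{ij}+b_{ji})+\ell_i\cdot\ell_j=0$. For example, $(1,2)$ gives $5\cdot5+(36+30)=91\equiv0$. These are routine and mechanically checkable, as in the Lean interface.

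Finally, confirm that the displayed ordinary matrix $G_{12}$ is the expansion of these block data. For instance, $b_{12}=4$ gives the block $(4,20)=(4,7)$. Theorem~\ref{thm:split-boxed-form} then yields that $G_{12}$ generates a self-dual $[12,6]$ code.

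\textbf{Step 2: the upper bound $d\le 6$.} Find an explicit coefficient vector $\lambda\in\F_{13}^6$ with $\wt(\lambda G_{12})=6$. I would first try small combinations of two rows. Rows sharing many blocks on the isotropic line are promising, since their difference cancels those blocks; rows $1$ and $5$, which have the same $\ell$, are a natural first candidate. Record such a word explicitly as the certificate.

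\textbf{Step 3: the lower bound $d\ge 6$.} This is the main obstacle. It requires showing that no nonzero codeword has weight at most $5$, which means an exact enumeration over the $13^6\approx4.8\times10^6$ coefficient vectors. Up to scalars this is $(13^6-1)/12$ projective points, which is easily feasible by computer.

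To make the argument more conceptual, one could instead use a systematic form $[I_6\mid A]$ and check that every $t\times t$ minor of the relevant submatrices behaves correctly. A weight-$\le5$ word would have to be zero on at least $7$ coordinates, so it suffices to show that every set of $7$ columns of $G_{12}$ has rank $6$. Equivalently, every $6\times 7$ column submatrix must have rank $6$. That is $\binom{12}{7}=792$ rank checks, a cleaner certificate than full enumeration.

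Unlike the $[8,4,5]$ and $[10,5,6]$ cases, no optimality claim is made here, so no Griesmer-type argument is needed. The statement only asserts the exact parameters $[12,6,6]$.
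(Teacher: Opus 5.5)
Your plan is the same as the paper's proof: check the boxed relations so that Theorem~\ref{thm:split-boxed-form} gives self-duality, exhibit a weight-$6$ word, and exclude smaller weights by exact computation. All ten off-diagonal relations do hold, and your $\binom{12}{7}=792$ column-rank test is an equivalent certificate to the paper's full enumeration. One caution: your suggested first candidate, rows $1$ and $5$, reaches only weight $7$ (at $R_1+10R_5$), so it does not supply the witness. The paper's witness is the coefficient vector $(0,0,0,1,0,12)$, i.e.\ $R_4-R_6=(6,4,0,0,0,0,6,5,0,0,9,12)$.
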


\begin{proof}
The boxed relations give self-duality by
Theorem~\ref{thm:split-boxed-form}.  The coefficient vector
\[
(0,0,0,1,0,12)
\]
produces
\[
(6,4,0,0,0,0,6,5,0,0,9,12),
\]
which has weight \(6\).  Exact enumeration of the \(13^6\) coefficient
vectors gives no nonzero word of smaller weight.  Hence the minimum distance
is \(6\).
\end{proof}

\subsection{A repeated boxed realization over
\texorpdfstring{\(\F_{13}\)}{F13}}

Let \(c=5\in\F_{13}\), so \(c^2=-1\).  If \(G\) has rows
\(g_1,\ldots,g_m\), put
\begin{equation}
\mathcal B_5(G,x)=
\left[
\begin{array}{cc!{\vrule width 1.2pt}c}
0&1&x\\
\noalign{\hrule height 1.2pt}
-5y_1&-y_1&g_1\\
\vdots&\vdots&\vdots\\
-5y_m&-y_m&g_m
\end{array}
\right],
\qquad y_i=x\cdot g_i,\qquad x\cdot x=-1.
\label{eq:gf13-repeated-box}
\end{equation}
This is the matrix in Theorem~\ref{thm:qary-building-up} after interchanging
its first two coordinates.  Thus the new diagonal block is \((0,1)\), in the
same orientation as the diagonal blocks of the split-boxed parent.

\begin{prop}[A repeated boxed realization over
\texorpdfstring{\(\F_{13}\)}{F13}]
\label{ex:gf13-repeated-realization}
\rm
Let \(\mathcal C_{20}^{(13)}\) be the pure double-circulant self-dual code generated
by
\[
G_{20}=\left[I_{10}\ \middle|\
\operatorname{circ}(3,7,5,5,2,8,5,6,3,0)\right].
\]
It has parameters \([20,10,10]\) and \(A_{10}=111024\)
\cite{BetsumiyaEtAl}. After the ordered two-coordinate reduction with
column \(11\) first and column \(7\) second, denoted by \((11,7)\), followed
by the recorded row normalization and coordinate
permutation, one obtains a self-dual parent \(\mathcal C_{18}^{(13)}\) with parameters
\([18,9,8]\) and \(A_8=1896\).  There is a split-boxed generator
\(G_{18}^{(13)}\) of this parent and
\[
x_{18}^{(13)}=\GFThirteenEighteenCorrection
\]
such that
\[
x_{18}^{(13)}\cdot x_{18}^{(13)}=-1,\qquad
\left\langle\mathcal B_5
 \bigl(G_{18}^{(13)},x_{18}^{(13)}\bigr)\right\rangle_{\F_{13}}
 =\mathcal C_{20}^{(13)}.
\]
In \(G_{18}^{(13)}\), every pivot diagonal block is \((0,1)\), every
off-diagonal block is \(b_{ij}(1,5)\), and the terminal block is \((1,5)\).
Hence the parent is literally in the zero-diagonal split boxed form of
Theorem~\ref{thm:split-boxed-form}, while
\(\mathcal B_5(G_{18}^{(13)},x_{18}^{(13)})\) is its Kim--Lee
building-up child.
\end{prop}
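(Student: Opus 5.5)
The statement is a finite, explicit claim about concrete matrices over \(\F_{13}\), so the proof will be a certificate check organized by the earlier theorems. Verifying everything by brute force would be far too much; the theorems cut the work down to a few exact computations.

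\textbf{Step 1: the child code.} First we confirm that \(G_{20}\) generates a self-dual code. By Proposition~\ref{prop:systematic-form}, this is equivalent to \(AA^T=-I_{10}\) for \(A=\operatorname{circ}(3,7,5,5,2,8,5,6,3,0)\). Since \(A\) is circulant, \(AA^T\) is circulant too. Its first row is the periodic autocorrelation of \((3,7,5,5,2,8,5,6,3,0)\) modulo \(13\). Hence the check reduces to showing that the zero-shift value is \(-1\) and that the shifts \(1,\dots,5\) vanish, by symmetry. The parameters \([20,10,10]\) and \(A_{10}=111024\) are taken from~\cite{BetsumiyaEtAl}, and we would also recheck them by the recorded exact enumeration.

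\textbf{Step 2: the reduction and the parent.} Next we carry out the reduction on the pair of columns \((11,7)\), in that order, and obtain the displayed matrix of the form~\eqref{eq:gf13-repeated-box}. In its row-\(0\) head, the pair \((0,1)\) comes from a row operation isolating one codeword. Every other row must have head \((-5y_i,-y_i)\), which lies on the line \(K(5,1)=K(1,c)\) up to orientation. This is the only place where the specific choice of the pair \((11,7)\) matters. Once this adapted shape is in hand, Theorem~\ref{thm:qary-building-up}, with its first two coordinates interchanged as noted after~\eqref{eq:gf13-repeated-box}, yields three conclusions at once:
\begin{itemize}
\item the tail \(G_{18}^{(13)}\) spans a self-dual code \(\mathcal C_{18}^{(13)}\);
\item \(x_{18}^{(13)}\cdot x_{18}^{(13)}=-1\), which we would also confirm directly as a sum of squares of the listed entries modulo \(13\);
\item \(y_i=x\cdot g_i\), so the full matrix is exactly \(\mathcal B_5(G_{18}^{(13)},x_{18}^{(13)})\).
\end{itemize}
Equality of the row span with \(\mathcal C_{20}^{(13)}\) follows because both codes are self-dual of dimension \(10\) and the reduced matrix was obtained from \(G_{20}\) by row operations and a recorded coordinate permutation. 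Alternatively, one checks that each row of the displayed matrix lies in the span of \(G_{20}\) under the permutation, and then compares dimensions.

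\textbf{Step 3: boxed shape of the parent.} For the claim that \(G_{18}^{(13)}\) is literally in split boxed form, we read off the \(2\times2\) blocks of the displayed \(20\)-column matrix with \(c=5\). The diagonal blocks must be \((0,1)\), each off-diagonal pair must have the form \((b,5b)\), the terminal block must be \((1,5)\), and the last-row blocks must have the form \(a_i(1,5)\). Once this shape is confirmed, Theorem~\ref{thm:conditional-split-boxed-normalization} shows that the relations \eqref{eq:split-boxed-norm}--\eqref{eq:split-boxed-offdiag} are forced by self-duality, so nothing further needs to be verified. As an independent consistency check, Theorem~\ref{thm:split-boxed-form} applied to those relations reproves self-duality of the parent.

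\textbf{Main obstacle.} The hard part is not the algebra but the distance claim \([18,9,8]\) with \(A_8=1896\). This requires an exact weight enumeration over \(13^9\approx 10^{10}\) codewords, or equivalently computing the \(A_8\) coefficient from the MacWilliams identities together with a partial low-weight enumeration. We would carry this out using the JSON certificate. For the enumeration we would try a Brouwer--Zimmermann style search over information sets, which certifies that no nonzero codeword has weight below \(8\) and counts the codewords of weight \(8\). The analogous computation for the child confirms \(A_{10}=111024\).
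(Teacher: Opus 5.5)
Your proof is correct, but it runs the logic in the opposite direction from the paper's. The paper certifies each ingredient directly:
\begin{enumerate}
\item it computes the rank and Gram matrix of \(G_{20}\) and of the \((11,7)\)-reduction;
\item it checks the boxed relations blockwise and applies the \emph{forward} Theorem~\ref{thm:split-boxed-form} to get the parent's self-duality;
\item it checks \(x_{18}^{(13)}\cdot x_{18}^{(13)}=-1\) and \(y_i=x_{18}^{(13)}\cdot g_i\) by multiplication;
\item it uses the comparison with the row space of \(G_{20}\) only for the final reconstruction identity.
\end{enumerate}
You instead make one quadratic check: \(AA^{T}=-I_{10}\) via six autocorrelation sums, which do reduce to \(-1,0,0,0,0,0\) modulo \(13\). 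You then verify the shapes and the row-span provenance. The \emph{reverse} statements then supply the rest: Theorem~\ref{thm:qary-building-up}(2)--(4) applied to the column-swapped matrix, followed by Theorem~\ref{thm:conditional-split-boxed-normalization}. Together they give the parent's self-duality, the norm of the correction, the identities \(y_i=x\cdot g_i\), and all boxed relations at no extra cost. Your route needs less computation and presents the proposition as a literal instance of the reduction theorems. The paper's route is the sturdier certificate: the parent's self-duality and the building-up identity are checked intrinsically. In your version, every conclusion about the parent rests on the row-span equality with \(G_{20}\) having been computed correctly.

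A few small corrections.
\begin{enumerate}
\item ``Both codes are self-dual of dimension \(10\)'' does not by itself give equality of row spans. What does is your alternative: each displayed row lies in the permuted \(\mathcal C_{20}^{(13)}\), and the displayed matrix has rank \(10\), which its boxed shape makes immediate.
\item The choice of the pair \((11,7)\) matters beyond producing the head shape. It determines which parent arises, and hence the data \([18,9,8]\), \(A_8=1896\); the paper's audit shows the weight loss varies over ordered pairs.
\item The MacWilliams identities cannot supply \(A_8\) here. Hamming enumerators of self-dual codes of length \(18\) over \(\F_{13}\) lie in a \(10\)-dimensional space of invariants. So \(A_0=1\) and \(A_1=\cdots=A_7=0\) still leave two free parameters, and the weight-\(8\) words must be counted directly, as your Brouwer--Zimmermann option does.
\item \(K(5,1)\) is literally \(K(1,-c)\); it becomes \(K(1,c)\) only after the column swap.
\end{enumerate}
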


\begin{proof}
Exact calculation over \(\F_{13}\) gives
\[
\rank G_{20}=10,\qquad G_{20}G_{20}^{\mathsf T}=0,
\]
and its complete weight distribution has first nonzero coefficient
\(A_{10}=111024\). The two-coordinate reduction at this ordered pair has rank
\(9\), zero Gram matrix, and first nonzero weight coefficient
\(A_8=1896\); it therefore generates the asserted self-dual parent.
The order is essential because the normalized top row has entries \((1,0)\)
in the first and second selected columns, respectively.

For the normalized parent, direct blockwise calculation verifies the
diagonal, off-diagonal, and terminal relations stated above.  Theorem~
\ref{thm:split-boxed-form} therefore gives its self-duality.  Moreover,
the displayed correction has norm \(-1\), and multiplication gives
\(y_i=x_{18}^{(13)}\cdot g_i\) for every parent row.  Swapping the first two
columns of \eqref{eq:gf13-repeated-box} therefore gives exactly the forward
matrix in Theorem~\ref{thm:qary-building-up}.  Its row space agrees with the
row space of \(G_{20}\) after the recorded permutation, which proves the
reconstruction identity.
\end{proof}

\begin{samepage}
Figure~\ref{fig:gf13-repeated-realization} displays the complete matrix, not
only its block pattern.  Blue marks the newly adjoined row, and orange marks
the new coordinate pair in the lower rows.  Inside the parent, teal marks
the pivot blocks, violet marks the terminal row and terminal pair, and light
gray marks the diagonal two-coordinate blocks.  The outer thick rules
separate the building-up border; the inner thick rules separate the pivot
part of the split-boxed parent from its terminal part.

\begin{center}
\centering
\resizebox{\textwidth}{!}{$
\widetilde G_{20}^{(13)}=\GFThirteenLargestRepeatedMatrix.
$}
\captionof{figure}{Repeated boxed realization
\(\mathcal C_{18}^{(13)}\longleftrightarrow\mathcal C_{20}^{(13)}\).}
\label{fig:gf13-repeated-realization}
\end{center}
\end{samepage}

Thus the same figure exhibits both directions:
\[
\mathcal C_{18}^{(13)}
\ \xrightarrow{\ \mathcal B_5(G_{18}^{(13)},x_{18}^{(13)})\ }\
\mathcal C_{20}^{(13)},
\qquad
\mathcal C_{20}^{(13)}
\ \xrightarrow{\ \substack{\text{columns }11,7\\\text{in this order}}\ }\
\mathcal C_{18}^{(13)}.
\]
Both distances equal the published best-known distances at their respective
lengths~\cite{KimChoi2021,BetsumiyaEtAl}.  The complete audit of all
\(20\cdot19=380\) ordered coordinate
pairs also identifies the limitation of this fixed child: every reduction
loses a weight-ten word, and the smallest loss is \(1896\), attained by ten
ordered pairs.  Hence this code does not produce a self-dual
\([18,9,9]\) parent by one two-coordinate reduction; this is a statement
about the fixed code, not a nonexistence theorem.

Thus one bidirectional calculation simultaneously displays the split-boxed
parent, the repeated building-up border, exact reconstruction, and
the behavior of the minimum-weight words.
%% ===========================================================================
\section{Conclusion}
\label{sec:conclusion}
%% ===========================================================================

We have separated the binary arithmetic comparison from the split
odd-characteristic geometry, proved the adapted \(q\)-ary forward and reverse
statements, and then removed the adapted-presentation hypothesis by a
universal rank-\(r\) normal form for every fixed pairing.  Thus the same
two-coordinate mechanism supports binary reduction, Kim--Lee extension, and
fixed-pairing normalization without identifying the binary Euclidean plane
with an alternating hyperbolic plane.  The applications retain the explicit
small-length codes over \(\F_{5}\) and \(\F_{13}\) from the submitted
manuscript and add the complete repeated boxed realization
\([18,9,8]_{13}\leftrightarrow[20,10,10]_{13}\), where the split-boxed parent
and its building-up child are visible in one generator matrix.  Lean verifies
the algebraic interfaces and their dependencies; the arithmetic realization
theorem and finite distance enumerations remain explicitly identified
mathematical and computational inputs.

%% ===========================================================================
\section*{Availability of the formalization}
%% ===========================================================================

The modular development uses Lean~\texttt{v4.29.0-rc6} and Mathlib.  Its
seventeen independent Comparator suites check 129 declarations, including
the exact fixed-pairing normal-form goal.  Production proofs use neither
\texttt{sorry} nor \texttt{native\_decide}; their transitive axioms are
confined to \texttt{propext}, \texttt{Quot.sound}, and
\texttt{Classical.choice}.  Reproducible finite-field calculations and the
application verification records are included in the same public repository:
\begin{center}
\href{https://github.com/LeGenAI/intersection-coding-theory-cohomology}
{\texttt{LeGenAI/intersection-coding-theory-cohomology}}.
\end{center}

\bibliographystyle{ieeetr}

\end{document}